\documentclass[letterpaper, 10 pt]{IEEEtran}
\IEEEoverridecommandlockouts
\usepackage{cite}
\usepackage{amsmath,amssymb,amsfonts}
\usepackage{algorithmic}
\usepackage{graphicx}
\usepackage{textcomp}
\usepackage{xcolor}
\usepackage{mathrsfs}
\usepackage{psfrag}
\usepackage{cancel}
\usepackage{verbatim}
\usepackage{graphicx}
\usepackage{graphicx}
\usepackage{soul}
\usepackage{circuitikz}
\usepackage[percent]{overpic}
\usetikzlibrary{decorations.pathreplacing,calligraphy,patterns}
\makeatletter
\let\NAT@parse\undefined
\makeatother
\usepackage{hyperref}
\usepackage{theoremref}

\def\mc{\mathcal}
\def\mb{\mathbb}
\def\l({\left(}
\def\r){\right)}
\def\mi{\mathrm{int}}

\begin{document}
\sloppy
\title{\LARGE \bf Digital Control of Negative Imaginary Systems Using Discrete-Time Multi-HIGS: Application to a Dual-Stage MEMS Force Sensor}
\author{Kanghong Shi, Diyako Dadkhah, Ian R. Petersen, S. O. Reza Moheimani
\thanks{This work was supported by the Australian Research Council under grant DP230102443. This work was supported in part by the University of Texas at Dallas through the James Von Ehr Distinguished Chair in Science and Technology. \emph{(Corresponding author: S. O. Reza Moheimani.)}}
\thanks{Kanghong Shi is with the Department of Electrical and Computer Systems Engineering, Monash University, VIC, Australia. Diyako Dadkhah and S. O. Reza Moheimani are with the Erik Jonsson School of Engineering and Computer Science, The University of Texas at Dallas, Richardson, TX 75080 USA. Ian R. Petersen is with the School of Engineering, College of Systems and Society, Australian National University, Canberra, Acton, ACT 2601, Australia. {\tt kanghong.shi@outlook.com}, {\tt diyako.dadkhah@utdallas.edu}, {\tt ian.petersen@anu.edu.au}, {\tt reza.moheimani@utdallas.edu}.}
}

\newtheorem{definition}{Definition}
\newtheorem{theorem}{Theorem}
\newtheorem{conjecture}{Conjecture}
\newtheorem{lemma}{Lemma}
\newtheorem{remark}{Remark}
\newtheorem{corollary}{Corollary}
\newtheorem{assumption}{Assumption}

\renewcommand{\IEEEQED}{\IEEEQEDopen}

\maketitle
\thispagestyle{plain}
\pagestyle{plain}

\begin{abstract}
In this paper, we propose a digital control approach for multi-input multi-output negative imaginary (NI) systems using discrete-time hybrid integrator–gain systems (HIGS) controllers. We show the NI property of the bimodal and trimodal discrete-time HIGS, as well as the parallel combinations of them, which are referred to as the multi-HIGS. Also, we demonstrate that linear NI systems can be asymptotically stabilized using discrete-time HIGS in digital control. We apply discrete-time bimodal and trimodal multi-HIGS controllers to a two-input two-output dual-stage force sensor with lightly damped resonant modes. To validate the theoretical findings, the closed-loop performance is evaluated in both time and frequency domains. Experimental results show that the discrete-time multi-HIGS effectively suppresses resonances while preserving favorable phase characteristics, which highlights its potential as a robust nonlinear NI controller for the digital control of NI systems.
\end{abstract}

\begin{IEEEkeywords}
Negative imaginary system, hybrid integrator-gain system, discrete-time system, digital control, vibration control, microelectromechanical system force sensor, feedback stability, switched system.
\end{IEEEkeywords}

\section{INTRODUCTION}

Linear time-invariant (LTI) feedback control forms the cornerstone of modern control system design due to its rigorous theoretical foundations and the availability of powerful analysis and synthesis tools. However, the performance achievable with linear controllers is fundamentally constrained by intrinsic limitations such as the Bode gain–phase relationship, fixed system dynamics, and inherent performance trade-offs \cite{middleton1991trade}. These limitations become particularly pronounced in high-precision control systems for plants with lightly damped resonant dynamics, where improving low-frequency disturbance rejection often leads to a reduction in phase margin and increased sensitivity to resonant modes. As a result, purely linear control architectures may face inherent difficulties in simultaneously achieving high tracking accuracy, disturbance attenuation, and robust stability in lightly-damped precision mechatronic systems.

Nonlinear and hybrid control strategies have been investigated extensively \cite{dong2008control, peng2005modeling, ouakad2015nonlinear, luo2016chaos, liguo2004hybrid} to introduce additional design flexibility. A broad class of nonlinear control relies on state-dependent switching mechanisms, where the control action changes according to the evolution of the system trajectory. Examples include hysteresis control \cite{weller2002hysteresis}, relay control \cite{stefanini2011miniature}, and sliding mode control \cite{zhang2016adaptive}, which employ discontinuous control laws triggered when the system state crosses prescribed switching surfaces. 

Hybrid control forms an important subclass of switching-based strategies in which continuous controller dynamics are coupled with discrete switching logic. Notable examples are the Clegg integrator \cite{clegg1958nonlinear}, first-order reset element \cite{horowitz1975non}, 
and related hybrid structures whose dynamics change according to the system trajectory. The hybrid integrator gain system (HIGS) is a type of nonlinear hybrid controller that combines the properties of an integrator and a static gain within a sector bounded framework. The controller switches between integrator and gain modes depending on the input-output trajectory relative to a predefined sector \cite{deenen2017hybrid}. This mechanism preserves the high low-frequency gain of an integrator while significantly reducing the associated phase lag. In contrast to linear integral action which introduces $90$ degrees of phase lag, a HIGS only introduces approximately $38.15$ degrees of phase lag while producing a continuous control signal. These characteristics make the HIGS an attractive choice for high performance motion \cite{dadkhah2025digital, van2020experimental} and vibration \cite{shi2024hybrid, van2020hybrid, heertjes2019hybrid} control of lightly damped mechanical systems where both strong disturbance rejection and adequate phase margin are required.

Implementation of hybrid controllers in practical systems is typically carried out using digital hardware. When controllers are designed in continuous-time and subsequently discretized, the resulting sampled implementation may not accurately reproduce the intended hybrid dynamics due to sampling effects and limited interface bandwidth, specially in high-bandwidth systems or systems with multiple resonant modes. Moreover, the stability of such implementations often requires sufficiently high sampling frequencies \cite{franklin1998digital}. This may exceed the capabilities of available control hardware. These considerations motivate the direct development of discrete-time formulations of the HIGS, in which the hybrid switching mechanism is defined explicitly in the discrete domain \cite{sharif2022discrete}.

The discrete-time HIGS must ensure that the input-output trajectory remains within the prescribed sector under finite sampling intervals. Early formulations \cite{shi2025discrete} employ a bimodal structure that switches between integrator and gain modes depending on whether the predicted integrator update remains inside the sector \cite{sharif2022discrete}. However, the predicted integrator dynamics may leave the sector through either of its two boundaries, namely $u = k_h e$ or $u=0$, whereas the bimodal HIGS always restores the output to the gain boundary $u = k_h e$. To address this, the trimodal HIGS is introduced in \cite{sharif2024analysis}, which has a third operating mode (a zeroing mode) to project the integrator dynamics to the boundary through which the candidate trajectory exits the sector. This additional mode helps to more accurately capture the intended sector-constrained behavior and prevents large instantaneous jumps in the HIGS output.

The stability of HIGS can be analyzed using negative imaginary (NI) systems theory \cite{lanzon2008stability,petersen2010feedback}, which provides an effective approach for the robust control of mechatronic structures with highly resonant dynamics. The NI property for a continuous-time system is satisfied if the system is dissipative with respect to the supply rate $u^\top \dot y$, where $u$ and $y$ are its input and output, respectively \cite{ghallab2025negative}. Such a system can be stabilized using another NI system in positive feedback, given that either of them has some strictness property \cite{shi2023output}. This continuous-time NI property can be preserved under a zero-order hold (ZOH) discretization process in the sense that the corresponding discrete-time property, which is referred to as the ZOH-NI property, is fully expressed in terms of system variables at sampling instants \cite{shi2024discrete}. This observation, together with the inherent physical characteristics of the NI property, motivates a digital control framework where a discrete-time controller is directly designed for the ZOH-sampled NI plant \cite{shi2024discrete}. Such a control framework supports the use of the discrete-time HIGS as an NI controller \cite{shi2024digital}, where the HIGS element is designed directly from the sampled-data model without the need to first synthesize a continuous-time HIGS and subsequently discretize it. Compared with the conventional continuous-time design route, this approach is more direct, imposes less restrictive sampling-rate requirements, and avoids mismatch between the intended continuous-time design and its digital implementation.

Existing studies on HIGS-based NI control have mainly focused on continuous-time settings \cite{shi2022negative,shi2023negative} or on single-input single-output (SISO) discrete-time systems \cite{shi2025discrete}, which limits their applicability to high-speed systems with multivariable dynamics. In this paper, we apply discrete-time multi-HIGS for the digital control of multi-input multi-output (MIMO) NI plants. Specifically, we investigate the NI properties of bimodal and trimodal single-HIGS in discrete-time and extend them for multi-HIGS. Theoretical analysis is presented to establish the asymptotic stability of single-HIGS and multi-HIGS configurations for both bimodal and trimodal implementations. This framework highlights the practical applicability of discrete-time multi-HIGS for the control of NI systems equipped with collocated force actuators and position sensors. As a demonstration, the control framework is applied to a dual-stage microelectromechanical system (MEMS) force sensing device reported in \cite{dadkhah2024design}, where two parallel HIGS elements are designed to provide resonance damping, stabilize the plant, and mitigate cross-coupling between the stages. Frequency- and time-domain results are also presented to validate the theoretical findings.

The remainder of this article is organized as follows. Section \ref{sec:pre} presents preliminaries on HIGS and NI systems. Section \ref{Bimodal discrete-time HIGS and NI control} introduces the discrete-time bimodal single-HIGS and multi-HIGS models and establishes their SANI property. Similarly, Section \ref{Trimodal discrete-time HIGS and NI control} presents discrete-time trimodal single-HIGS and multi-HIGS models along with their SANI property. Both Sections \ref{Bimodal discrete-time HIGS and NI control} and \ref{Trimodal discrete-time HIGS and NI control} further demonstrate that discrete-time bimodal and trimodal multi-HIGS structures, composed of single-HIGS elements connected in parallel, can asymptotically stabilize linear NI multivariable systems when implemented in positive feedback. In Section \ref{Application: A MEMS Force Sensor}, the proposed discrete-time bimodal and trimodal multi-HIGS structures are experimentally applied to the control of a two-input two-output MEMS force sensor, and their performance is evaluated using frequency- and time-domain results. Finally, Section \ref{Conclusion} concludes the article.

Notation: The notation in this paper is standard. $\mathbb R$ denotes the field of real numbers. $\mb N$ denotes the set of nonnegative integers. $\mathbb R^{m\times n}$ denotes the space of real matrices of dimension $m\times n$. $A^{\top}$ denotes the transpose of a matrix $A$.  $A^{-\top}$ denotes the transpose of the inverse of $A$; that is, $A^{-\top}=(A^{-1})^{\top}=(A^{\top})^{-1}$. $\lambda_{max}(A)$ denotes the largest eigenvalue of a matrix $A$ with real spectrum. $\|\cdot\|$ denotes the standard Euclidean norm. For a real symmetric or complex Hermitian matrix $P$, $P>0\ (P\geq 0)$ denotes the positive (semi-)definiteness of a matrix $P$ and $P<0\ (P\leq 0)$ denotes the negative (semi-)definiteness of a matrix $P$. A function $V: \mb R^n \to \mb R$ is said to be positive definite if $V(0)=0$ and $V(x)>0$ for all $x\neq 0$. Let $\theta_i\in \mb R^p$ denote
the standard unit vector; i.e., the $i$-th element of $\theta_i$ is one and all other elements are zeros.

\section{PRELIMINARIES}\label{sec:pre}
In this section, we review some preliminary results on ZOH-NI systems that are introduced in \cite{shi2024discrete}. Such systems naturally arise when an NI physical model undergoes the sample and hold process in digital control, in which typical ZOH devices are used. Consider the discrete-time system
\begin{subequations}\label{eq:DT_nonlinear}
\begin{align}
	x_{k+1} =&\ f(x_k,u_k),\label{eq:state eq}\\
	y_k=&\ h(x_k),\label{eq:output eq}
\end{align}	
\end{subequations}
where $x_k \in \mc D \subseteq \mb R^n$, $u_k,y_k \in \mathbb R^p$ are the state, input and output of the system, respectively, at time step $k\in \mathbb N$. Here, $\mc D$ is an open set such that $0\in \mc D$. Also, $f\colon\mc D \times \mathbb R^p\to \mc D$ is Lipschitz in $x$, and $h\colon\mc D \to \mathbb R^p$ is continuous. We assume $f(0,0)=0$ and $h(0)=0$, without loss of generality.

\begin{definition}[ZOH-NI]\label{def:DT_NNI}\cite{shi2024discrete}
The system (\ref{eq:DT_nonlinear}) is said to be a zero-order hold negative imaginary (ZOH-NI) system if there exists a positive definite function $V\colon \mc D \to \mb R$ such that for arbitrary $x_k$ and $u_k$,
\begin{equation}\label{eq:NNI ineq}
V(x_{k+1})-V(x_{k})\leq u_k^{\top}\left(y_{k+1}-y_{k}\right),	
\end{equation}
for all $k$.
\end{definition}

We next recall a necessary and sufficient LMI condition under which Definition \ref{def:DT_NNI} is satisfied by a linear system of the form
\begin{subequations}\label{eq:G(z)}
	\begin{align}
\Sigma\colon\ 		x_{k+1} =&\ Ax_k+Bu_k,\label{eq:G(z) state eq}\\
		y_k =&\ Cx_k,\label{eq:G(z) output eq}
	\end{align}
\end{subequations}
where $u_k,y_k\in \mb R^p$ and $x_k\in \mathbb R^n$ are the input, output, and state of the system, respectively.
\begin{lemma}\label{lemma:LMI new DT-NI}\cite{shi2024discrete}
Suppose the linear system (\ref{eq:G(z)}) satisfies $\det(I-A)\neq 0$. Then the system (\ref{eq:G(z)}) is ZOH-NI with a positive definite quadratic storage function satisfying (\ref{eq:NNI ineq}) if and only if there exists a real matrix $P=P^{\top}>0$ such that
\begin{equation*}
	A^{\top}PA-P\leq 0 \quad \textnormal{and} \quad C = B^{\top}(I-A)^{-\top}P.
\end{equation*}
\end{lemma}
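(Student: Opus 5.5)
Take the quadratic storage $V(x)=\tfrac12 x^{\top}Qx$ with $Q=Q^{\top}>0$, and write out the dissipation inequality (\ref{eq:NNI ineq}) for (\ref{eq:G(z)}) as a quadratic form in $(x_k,u_k)$. Since $V(x_{k+1})-V(x_k)=\tfrac12(Ax+Bu)^{\top}Q(Ax+Bu)-\tfrac12 x^{\top}Qx$ and $u^{\top}(y_{k+1}-y_k)=u^{\top}C(A-I)x+u^{\top}CBu$, the requirement that (\ref{eq:NNI ineq}) hold for arbitrary $x_k,u_k$ is equivalent to
\begin{equation*}
M:=\begin{bmatrix} A^{\top}QA-Q & A^{\top}QB-(A-I)^{\top}C^{\top}\\ B^{\top}QA-C(A-I) & B^{\top}QB-CB-B^{\top}C^{\top}\end{bmatrix}\leq 0 .
\end{equation*}
Here the $u$-block has been symmetrized. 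The lemma then reduces to showing that $M\le 0$ for some $Q>0$ if and only if $A^{\top}PA-P\le0$ and $C=B^{\top}(I-A)^{-\top}P$ for some $P>0$. The natural guess is $P=Q$, possibly up to a fixed scaling depending on the factor $\tfrac12$ in $V$.

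\emph{Sufficiency.} Assume $C=B^{\top}(I-A)^{-\top}P$ and take $Q=P$. Then $C(A-I)=-B^{\top}(I-A)^{-\top}P(I-A)$. Set $F:=(I-A)^{-\top}P(I-A)^{-1}$. A direct computation should show that $M$ factors as a congruence
\begin{equation*}
M=\begin{bmatrix} I&0\\ 0& B^{\top}(I-A)^{-\top}\end{bmatrix}\tilde M\begin{bmatrix} I&0\\ 0&(I-A)^{-1}B\end{bmatrix},
\end{equation*}
or, more simply, it is enough to show that the form equals a negative multiple of $(x,u)$ evaluated through $A^{\top}PA-P$. The cleanest approach is a change of variables. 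Let $w=x+(I-A)^{-1}Bu$, which is the deviation from the equilibrium associated with the constant input $u$. Then $x_{k+1}-(I-A)^{-1}Bu=A\bigl(x_k-(I-A)^{-1}Bu\bigr)$, and $u^{\top}y=u^{\top}B^{\top}(I-A)^{-\top}Px$. Substituting, I expect $V(x_{k+1})-V(x_k)-u^{\top}(y_{k+1}-y_k)$ to collapse to $\tfrac12 \xi^{\top}(A^{\top}PA-P)\xi$ with $\xi=x_k-(I-A)^{-1}Bu_k$, where the quadratic terms in $u$ cancel. If the $u$-terms do not cancel exactly with the $\tfrac12$ normalization, I would adjust to $V=\tfrac12x^{\top}Px$ versus $x^{\top}Px$ accordingly. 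This identity is the core computation, and it immediately gives $\le0$.

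\emph{Necessity.} Suppose $M\le0$ for some $Q>0$. The $(1,1)$ block gives $A^{\top}QA-Q\le0$. For the coupling condition, use the equilibrium direction. Choosing $x=-(I-A)^{-1}Bu$ (the fixed point for constant $u$, with the sign convention above) makes $x_{k+1}=x_k$, so both $\Delta V$ and $\Delta y$ vanish. The quadratic form of $M$ is therefore zero along the subspace $\{(-(I-A)^{-1}Bu,u)\}$. Because $M\le0$, a vector with zero form lies in $\ker M$. Writing out $M v=0$ for these vectors yields linear identities. Combining the first block row with $A^{\top}QA-Q$ acting on $(I-A)^{-1}B$ should force $B^{\top}(I-A)^{-\top}Q=C$, after dividing by the invertible factor $(I-A)$. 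Set $P=Q$ (with the same scaling as in sufficiency).

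The main obstacle is this necessity step. The kernel argument produces an identity of the form $(A-I)^{\top}\bigl[QB(\cdot)-\ldots\bigr]$ and only constrains $C$ applied to vectors in the range of $(I-A)^{-1}B$ together with the first block row. I need to check carefully that the first block row of $Mv=0$ for all $u$ recovers all of $C$, and not merely $CB$. If it does not, I would instead exploit the full kernel structure. Complete the square in the $u$-block, using the fact that $M\le0$ with a rank-deficient Schur-type structure, to show that the off-diagonal block must equal $-(A^{\top}QA-Q)(I-A)^{-1}B$. Then I would solve that relation for $C$ using invertibility of $I-A$.
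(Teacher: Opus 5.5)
The paper gives no proof of this lemma; it imports it from the earlier ZOH-NI work. Your plan therefore stands on its own, and it holds up. Both computations you left as ``I expect'' close exactly with $V(x)=\tfrac12x^{\top}Qx$ and $P=Q$, with no rescaling. In matrix form, sufficiency is the factorization $M=T^{\top}(A^{\top}QA-Q)T$ with $T=[\,I\;\;-(I-A)^{-1}B\,]$, not the congruence you first guessed. Two corrections are needed.

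\emph{A sign error in necessity.} The equilibrium for a constant input $u$ is $x^{*}=(I-A)^{-1}Bu$, which you use correctly in $\xi=x_k-(I-A)^{-1}Bu_k$. Your $w=x+(I-A)^{-1}Bu$ and your necessity choice $x=-(I-A)^{-1}Bu$ have the wrong sign. For that vector, $x_{k+1}=(I-A)^{-1}(I-2A)Bu\neq x_k$ unless $Bu=0$. So the form of $M$ need not vanish there, and the kernel argument does not start. With $v=\bigl((I-A)^{-1}Bu,\,u\bigr)$ it does.

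\emph{The ``main obstacle'' is not one.} The first block row of $Mv=0$, required for every $u$, is the $n\times p$ matrix identity
\[ (A^{\top}QA-Q)(I-A)^{-1}B+A^{\top}QB-(A-I)^{\top}C^{\top}=0. \]
Write $A^{\top}QA-Q=A^{\top}Q(A-I)+(A-I)^{\top}Q$. Then the first term equals $-A^{\top}QB+(A-I)^{\top}Q(I-A)^{-1}B$, and the identity reduces to $(A-I)^{\top}\bigl[Q(I-A)^{-1}B-C^{\top}\bigr]=0$. Hence $C^{\top}=Q(I-A)^{-1}B$. All of $C$ is recovered, not just $CB$, because $C^{\top}$ enters this block only through the invertible factor $(A-I)^{\top}$. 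Your fallback, that the off-diagonal block equals $-(A^{\top}QA-Q)(I-A)^{-1}B$, is the same identity.

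\emph{A tidier route covering both directions.} The substitution $x_k=\xi+(I-A)^{-1}Bu_k$ is invertible in $(x_k,u_k)$ and gives $x_{k+1}-x_k=(A-I)\xi$. For arbitrary $C$,
\[ V(x_{k+1})-V(x_k)-u_k^{\top}(y_{k+1}-y_k)=\tfrac12\xi^{\top}(A^{\top}QA-Q)\xi+\xi^{\top}N u_k, \qquad N=(A-I)^{\top}\bigl[Q(I-A)^{-1}B-C^{\top}\bigr]. \]
There is no $u$--$u$ term. Nonpositivity for all $(\xi,u_k)$ therefore forces $N=0$: take $u_k=tN^{\top}\xi$ and let $t\to\infty$. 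It also forces $A^{\top}QA-Q\le 0$, by taking $u_k=0$. Conversely, these two conditions make the right-hand side $\tfrac12\xi^{\top}(A^{\top}QA-Q)\xi\le 0$. This gives both directions in one computation, without the kernel lemma.
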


To stabilize a ZOH-NI system, \cite{shi2024discrete} provides a control framework where a step-advanced negative imaginary (SANI) controller is applied in positive feedback to a ZOH plant. We show later in the present article that discrete-time HIGS are qualified as such controllers. Here, we provide the definition of SANI systems in the following. Consider the system
\begin{subequations}\label{eq:nonlinear SANI system}
\begin{align}
\widetilde x_{k+1} =&\ \widetilde f(\widetilde x_k, \widetilde u_k),\\
	\widetilde y_k=&\ \widetilde h(\widetilde x_k, \widetilde u_k),
\end{align}	
\end{subequations}
where $\widetilde x_k \in \widetilde{\mc D} \subseteq \mb R^m$, $\widetilde u_k, \widetilde y_k \in \mathbb R^p$ are the state, input and output of the system, respectively, at time step $k\in \mathbb N$. Here, $\mc D$ is an open set such that $0\in \widetilde{\mc D}$. Also, $\widetilde f\colon \widetilde{\mc D} \times \mathbb R^p\to \widetilde{\mc D}$ is Lipschitz in $\widetilde x$ and $\widetilde h\colon \widetilde{\mc D} \times \mb R^p \to \mathbb R^p$ is continuous. We assume $\widetilde f(0,0)=0$ and $\widetilde h(0,0)=0$, without loss of generality.
\begin{definition}[SANI]\label{def:SANI}\cite{shi2024discrete}
	The system \eqref{eq:nonlinear SANI system} is said to be a step-advanced negative imaginary (SANI) system if there exists a continuous function $\widehat h(x_k)$ such that:
\begin{enumerate}
	\item $\widetilde h(\widetilde x_k, \widetilde u_k)=\widehat h(\widetilde f(\widetilde x_k, \widetilde u_k))$;
	\item there exists a continuous positive definite function $\widetilde V\colon \mc D \to \mb R$ such that for arbitrary state $\widetilde x_k$ and input $\widetilde u_k$,
\begin{equation*}
\widetilde V(\widetilde x_{k+1})-\widetilde V(\widetilde x_{k})\leq \widetilde u_k^{\top}\left(\widehat h(\widetilde x_{k+1})-\widehat h(\widetilde x_{k})\right)
\end{equation*}
for all $k$.
\end{enumerate}
\end{definition}
\begin{remark}\label{remark:NI_SANI}
	Definition \ref{def:SANI} can be regarded as a variant of Definition \ref{def:DT_NNI} in which the system output is taken one step in advance. Specifically, suppose the system (\ref{eq:DT_nonlinear}) is ZOH-NI as per Definition \ref{def:DT_NNI}. Then a system with the same state equation (\ref{eq:state eq}) and an output equation $\widetilde y_k = \widehat h(x_{k+1})=\widetilde h(f(x_k,u_k))$ is an SANI system. This does not affect causality because $\widetilde h(f(x_k,u_k))$ is a function of the state $x_k$ and input $u_k$ at the current step $k$.
\end{remark}

\section{BIMODAL DISCRETE-TIME HIGS AND NI CONTROL}\label{Bimodal discrete-time HIGS and NI control}
In this section, we investigate NI control using a discrete-time bimodal HIGS. First, we provide the system model for a single bimodal HIGS and demonstrate that it possesses the SANI property. Next, we introduce the bimodal multi-HIGS architecture and prove its SANI property. Finally, we establish the closed-loop stability for the interconnection of a MIMO ZOH-NI plant and a bimodal multi-HIGS controller.
\subsection{Bimodal single HIGS model and SANI property} \label{Bimodal single HIGS model and SANI property}
The discrete-time bimodal HIGS was originally introduced in \cite{sharif2022discrete}. Here, we adapt the model presented in \cite{sharif2022discrete} to align with the nonlinear state-space description in \eqref{eq:DT_nonlinear}. The discrete-time bimodal HIGS is given by:
\begin{equation}\label{eq:bimodal HIGS}
\mathcal H:
\begin{cases}
x_h(k+1) = x_{\mathrm{int}}(k), & \text{if } (e(k),x_{\mathrm{int}}(k)) \in \mathcal F,\\[0.5mm]
x_h(k+1) = \kappa_h e(k),       & \text{if } (e(k),x_{\mathrm{int}}(k)) \notin \mathcal F,\\[0.5mm]
y_h(k) = x_h(k+1),
\end{cases}
\end{equation}
where $e(k),y_h(k),x_h(k)\in \mb R$ are the system input, output, and state, respectively. The constant parameters $\omega_h \geq 0$ and $\kappa_h > 0$ denote the integrator frequency and the gain value, respectively. The quantity
\begin{equation}\label{eq:x int}
x_{\mathrm{int}}(k) = x_h(k) + \omega_h e(k)
\end{equation}
represents the candidate state value obtained by applying linear integral action to the input $e(k)$ from the current state $x_h(k)$. The HIGS is designed to operate primarily in the integrator mode. This mode is active as long as the candidate value $x_{\mathrm{int}}(k)$ remains within the sector $[0,\kappa_h]$ relative to the input $e(k)$. Specifically, the integrator region $\mathcal F$ is defined as:
\begin{equation}\label{eq:F}
\mathcal F
= \bigl\{(e(k),x_{\mathrm{int}}(k))\in\mb R^2
\;\big|\;
x_{\mathrm{int}}(k)e(k) \ge \tfrac{1}{\kappa_h} x_{\mathrm{int}}(k)^2\bigr\}.
\end{equation}
If the candidate trajectory tends to leave this sector; i.e., $(e(k),x_{\mathrm{int}}(k)) \notin \mathcal F$, the HIGS switches to the gain mode, updating the state to $x_h(k+1) = \kappa_h e(k)$. This projection ensures the output strictly satisfies the sector constraint. Because the constraint is enforced directly on the input-output pair $(e(k),y_h(k)) = (e(k),x_h(k+1))$, we have $(e(k),y_h(k)) \in \mathcal F$ for all $k \ge 1$, regardless of the initial condition $x_h(0)$.

To streamline the notation in the subsequent analysis, we will denote the time-dependent variables $e(k)$, $x_h(k)$, $y_h(k)$, and $x_{\mathrm{int}}(k)$ using subscripts as $e_k$, $\widetilde x_k$, $\widetilde y_k$, and $\widetilde x_{k|\mathrm{int}}$, respectively. Additionally, note that the parameter $\omega_h$ used in this paper corresponds to the product $\omega_h T_s$ in \cite{sharif2022discrete,sharif2024analysis}, where $T_s$ is the sampling period. Because our analysis is conducted entirely in the discrete-time domain, we absorb $T_s$ into the integrator frequency $\omega_h$.

As established in \cite{shi2024digital}, the discrete-time bimodal HIGS is an SANI system. In other words, the system \eqref{eq:bimodal HIGS} exhibits the ZOH-NI property from the input $e_k$ to the state $\widetilde x_k$.

\begin{theorem}\cite{shi2024digital}\label{theorem:single HIGS SANI}
The bimodal HIGS of the form (\ref{eq:bimodal HIGS}) is an SANI system with the storage function
\begin{equation}\label{eq:HIGS storage function}
	\widetilde V(\widetilde x_k) = \frac{1}{2\kappa_h}\widetilde x_k^2
\end{equation}
satisfying
\begin{equation}\label{eq:NNI ineq for HIGS aux}
	\widetilde V(\widetilde x_{k+1})-\widetilde V(\widetilde x_k)\leq e_k(\widetilde x_{k+1}-\widetilde x_k),
\end{equation}
for any input $e_k$ and state $\widetilde x_k$.
\end{theorem}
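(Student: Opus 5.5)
We first check condition 1 of Definition \ref{def:SANI}. Taking $\widehat h(\widetilde x)=\widetilde x$, the output equation in \eqref{eq:bimodal HIGS} gives $\widetilde y_k=\widetilde x_{k+1}=\widetilde f(\widetilde x_k,e_k)$. Here $\widetilde f$ is the piecewise map defined by the two modes.

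Condition 2 then reduces to the dissipation inequality \eqref{eq:NNI ineq for HIGS aux} with the storage function \eqref{eq:HIGS storage function}. The function $\widetilde V$ is clearly continuous and positive definite since $\kappa_h>0$. Multiplying \eqref{eq:NNI ineq for HIGS aux} by $2\kappa_h$, it suffices to show
\begin{equation*}
\Delta := \widetilde x_{k+1}^2-\widetilde x_k^2-2\kappa_h e_k(\widetilde x_{k+1}-\widetilde x_k)\le 0 .
\end{equation*}
The key algebraic step is completing the square in $\widetilde x_k$:
\begin{equation*}
\Delta = \widetilde x_{k+1}^2-2\kappa_h e_k\widetilde x_{k+1} + (\kappa_h e_k)^2 - (\widetilde x_k-\kappa_h e_k)^2 .
\end{equation*}
Hence $\Delta = (\widetilde x_{k+1}-\kappa_h e_k)^2-(\widetilde x_k-\kappa_h e_k)^2$, and the goal becomes $|\widetilde x_{k+1}-\kappa_h e_k|\le |\widetilde x_k-\kappa_h e_k|$.

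We then split into the two modes.

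\emph{Gain mode.} Here $\widetilde x_{k+1}=\kappa_h e_k$, so the first square vanishes and $\Delta=-(\widetilde x_k-\kappa_h e_k)^2\le 0$.

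\emph{Integrator mode.} Here $\widetilde x_{k+1}=\widetilde x_{k|\mathrm{int}}=\widetilde x_k+\omega_h e_k$ and $(e_k,\widetilde x_{k+1})\in\mathcal F$. Write $a=\widetilde x_{k+1}-\kappa_h e_k$, so that $\widetilde x_k-\kappa_h e_k=a-\omega_h e_k$. Then
\begin{equation*}
\Delta=a^2-(a-\omega_h e_k)^2=\omega_h e_k(2a-\omega_h e_k)=2\omega_h e_k a-\omega_h^2e_k^2 .
\end{equation*}
The sector condition $\widetilde x_{k+1}e_k\ge \widetilde x_{k+1}^2/\kappa_h$ is equivalent to $\widetilde x_{k+1}(\widetilde x_{k+1}-\kappa_h e_k)\le 0$, i.e.\ $\widetilde x_{k+1}a\le 0$. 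This means $\widetilde x_{k+1}$ lies between $0$ and $\kappa_h e_k$, so $a$ has sign opposite to $e_k$ (or is zero). Therefore $e_k a\le 0$, and combined with $\omega_h\ge 0$ this gives $\Delta\le -\omega_h^2e_k^2\le 0$.

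The step needing the most care is this sign argument in the integrator mode. If $\kappa_h e_k\ge 0$, the condition forces $0\le \widetilde x_{k+1}\le\kappa_h e_k$, so $a\le 0$ while $e_k\ge0$. The case $\kappa_h e_k<0$ is symmetric. With this, the inequality holds for arbitrary $e_k$ and $\widetilde x_k$, which completes the plan.
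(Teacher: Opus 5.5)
Your proof is correct. The paper does not reprove this statement; it is imported from \cite{shi2024digital}. Your mode-by-mode verification is the same direct argument the paper uses for the analogous Theorem~\ref{theorem:single trimodal HIGS SANI}: the gain mode collapses to $-\frac{1}{2\kappa_h}(\widetilde x_k-\kappa_h e_k)^2\le 0$, and in the integrator mode the sector condition on $(e_k,\widetilde x_{k|\mathrm{int}})$ supplies the needed sign. Your identity $\Delta=(\widetilde x_{k+1}-\kappa_h e_k)^2-(\widetilde x_k-\kappa_h e_k)^2$ organizes this neatly, and it also shows at once the equality cases that the lossless analysis in Theorem~\ref{theorem:bimodal stability} relies on: $e_k=0$ in integrator mode when $\omega_h>0$, and $\widetilde x_k=\kappa_h e_k$ in gain mode.
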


\subsection{Multi-HIGS structure and SANI property}
We now introduce the discrete-time multi-HIGS architecture, which serves as the foundational MIMO framework for both the bimodal and trimodal HIGS controllers discussed in subsequent sections.

\begin{figure}[h!]
\centering
\psfrag{e_1}{$e^{[1]}$}
\psfrag{e_2}{$e^{[2]}$}
\psfrag{e_n}{$e^{[p]}$}
\psfrag{udd}{$\vdots$}
\psfrag{u_1}{$\widetilde y^{[1]}$}
\psfrag{u_2}{$\widetilde y^{[2]}$}
\psfrag{u_n}{$\widetilde y^{[p]}$}
\psfrag{odd}{$\vdots$}
\psfrag{ddd}{$\vdots$}
\psfrag{H_p1}{$\mc H_1$}
\psfrag{H_p2}{$\mc H_2$}
\psfrag{H_pn}{$\mc H_p$}
\psfrag{H_0}{$\widehat {\mc H}$}
\includegraphics[width=9.5cm]{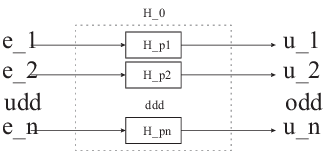}
\caption{A discrete-time multi-HIGS $\widehat {\mc H}$ formed by connecting $p$ scalar HIGS $\mc H_i$ in parallel. It is called a bimodal (resp. trimodal) multi-HIGS if each $\mc H_i$ is a bimodal (resp. trimodal) HIGS.}
\label{fig:multi_HIGS}
\end{figure}

Consider $p$ scalar HIGS $\mc H_i$ connected in parallel, as shown in Fig.~\ref{fig:multi_HIGS}. The input, output and state of the $i$-th HIGS $\mc H_i$ at time step $k\in \mb N$ are denoted by $e^{[i]}_k$, $\widetilde y^{[i]}_k$ and $\widetilde x^{[i]}_k$, respectively. The stacked input, output and state vectors of the multi-HIGS $\widehat {\mc H}$ are given by
\begin{align}
	E_k =& \begin{bmatrix} 
 	e^{[1]}_k& e^{[2]}_k &\cdots & e^{[p]}_k
 \end{bmatrix}^{\top},\label{eq:multi_HIGS_input}\\
	\widetilde Y_k =& \begin{bmatrix} 
 	\widetilde y^{[1]}_k & \widetilde y^{[2]}_k&\cdots & \widetilde y^{[p]}_k
 \end{bmatrix}^{\top},\label{eq:multi_HIGS_output}\\
\widetilde X_k =& \begin{bmatrix} 
 	\widetilde x^{[1]}_k& \widetilde x^{[2]}_k& \cdots & \widetilde x^{[p]}_k
 \end{bmatrix}^{\top}.\label{eq:multi_HIGS_state}
\end{align}

Each individual HIGS $\mc H_i$ is parameterized by an integrator frequency $\omega_i \geq 0$ and a gain value $\kappa_i > 0$. We compactly represent these gain values using the diagonal matrix:
\begin{equation}\label{eq:K_def}
	K = \mathrm{diag}\{\kappa_1,\cdots,\kappa_p\}.
\end{equation}

Before specifying the bimodal or trimodal cases, we establish a general result demonstrating that any multi-HIGS constructed from parallel SANI scalar channels inherently preserves the overall SANI property.

\begin{theorem}\label{theorem:SANI property of multi-HIGS}
Consider a multi-HIGS $\widehat{\mc H}$ as shown in Fig.~\ref{fig:multi_HIGS}, where the $i$-th scalar channel $\mc H_i$ has the storage function:
\begin{equation}\label{eq:HIGS i storage generic}
\widetilde V_i\left(\widetilde x^{[i]}_k\right) = \frac{1}{2\kappa_i}\left(\widetilde x^{[i]}_k\right)^2
\end{equation}
and satisfies the SANI dissipation inequality:
\begin{equation}\label{eq:SANI ineq for HIGS i generic}
		\widetilde V_i\left(\widetilde x^{[i]}_{k+1}\right)-\widetilde V_i\left(\widetilde x^{[i]}_k\right)\leq e^{[i]}_k\left(\widetilde x^{[i]}_{k+1}-\widetilde x^{[i]}_k\right).
\end{equation}
Then the multi-HIGS $\widehat{\mc H}$ is an SANI system from the input $E_k$ to the output $\widetilde Y_k$ with the storage function
	\begin{equation}\label{eq:multi-HIGS storage}
		\widehat V(\widetilde X_k) = \frac{1}{2}\widetilde X_k^{\top} K^{-1}\widetilde X_k
	\end{equation} 
	satisfying
	\begin{equation}\label{eq:multi-HIGS SANI inequality}
\widehat V(\widetilde X_{k+1})-	\widehat V(\widetilde X_{k})\leq E_k^{\top}(\widetilde X_{k+1} - \widetilde X_{k}),
	\end{equation}
where $K$ is given by \eqref{eq:K_def}, and $\widetilde Y_k = \widetilde X_{k+1}$.
\end{theorem}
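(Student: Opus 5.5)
The argument sums the scalar dissipation inequalities and then checks the two conditions of Definition~\ref{def:SANI} for the stacked system. Each scalar HIGS $\mc H_i$ has output $\widetilde y^{[i]}_k = \widetilde x^{[i]}_{k+1}$, so the stacked output is $\widetilde Y_k = \widetilde X_{k+1}$. The state update of $\widehat{\mc H}$ is the componentwise map $\widetilde X_{k+1} = \widetilde F(\widetilde X_k, E_k)$, whose $i$-th component is the update of $\mc H_i$ driven by $(\widetilde x^{[i]}_k, e^{[i]}_k)$. Taking $\widehat h$ to be the identity map on $\mb R^p$ gives $\widetilde Y_k = \widehat h(\widetilde F(\widetilde X_k,E_k))$, which is condition~1 of Definition~\ref{def:SANI}.

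For condition~2, I would first observe that, because $K$ is diagonal, the proposed storage function \eqref{eq:multi-HIGS storage} decomposes as
\begin{equation*}
\widehat V(\widetilde X_k) = \frac{1}{2}\sum_{i=1}^p \frac{1}{\kappa_i}\left(\widetilde x^{[i]}_k\right)^2 = \sum_{i=1}^p \widetilde V_i\left(\widetilde x^{[i]}_k\right).
\end{equation*}
Since every $\kappa_i>0$, we have $K^{-1}>0$, so $\widehat V$ is continuous, vanishes at the origin, and is positive definite. Summing the scalar inequalities \eqref{eq:SANI ineq for HIGS i generic} over $i=1,\dots,p$ then yields
\begin{equation*}
\widehat V(\widetilde X_{k+1})-\widehat V(\widetilde X_k) \leq \sum_{i=1}^p e^{[i]}_k\left(\widetilde x^{[i]}_{k+1}-\widetilde x^{[i]}_k\right) = E_k^\top\left(\widetilde X_{k+1}-\widetilde X_k\right).
\end{equation*}
This is \eqref{eq:multi-HIGS SANI inequality}. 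With $\widehat h$ the identity, it is exactly the dissipation inequality required in Definition~\ref{def:SANI}.

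The only step needing care is justifying that the scalar inequalities may be summed for arbitrary $\widetilde X_k$ and $E_k$. This holds because the channels are decoupled: each scalar inequality is assumed to hold for arbitrary $\widetilde x^{[i]}_k$ and $e^{[i]}_k$, so any choice of the stacked state and input instantiates all $p$ inequalities simultaneously. I also expect to need a brief remark that the regularity requirements of Definition~\ref{def:SANI} (Lipschitz state map, continuous output map) carry over componentwise from the scalar channels. The dissipation argument itself should be routine.
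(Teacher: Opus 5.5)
Your proposal is correct and follows the paper's proof. Like the paper, you write $\widehat V(\widetilde X_k)=\sum_{i=1}^p \widetilde V_i(\widetilde x^{[i]}_k)$ using the diagonal $K$, sum the scalar inequalities \eqref{eq:SANI ineq for HIGS i generic} to get \eqref{eq:multi-HIGS SANI inequality}, and use $\widetilde Y_k=\widetilde X_{k+1}$ to conclude via Definition~\ref{def:SANI}. Your extra remarks (positive definiteness via $K^{-1}>0$, and $\widehat h$ as the identity) only spell out what the paper leaves implicit. The componentwise regularity remark is not addressed in the paper at all.
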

\begin{IEEEproof}
By definition, the storage function of the multi-HIGS given by \eqref{eq:multi-HIGS storage} is the sum of all individual HIGS's storage functions \eqref{eq:HIGS i storage generic}; i.e.,
\begin{equation*}
	\widehat V(\widetilde X_k) = \sum_{i=1}^p\widetilde V_i\left(\widetilde x^{[i]}_k\right).
\end{equation*}
Applying the vector definitions \eqref{eq:multi_HIGS_input} and \eqref{eq:multi_HIGS_state} alongside the individual dissipation inequalities \eqref{eq:SANI ineq for HIGS i generic}, we obtain:
\begin{align}
	\widehat V(\widetilde X_{k+1})-	\widehat V(\widetilde X_{k})=& \sum_{i=1}^p\left[\widetilde V_i\left(\widetilde x^{[i]}_{k+1}\right)-\widetilde V_i\left(\widetilde x^{[i]}_k\right)\right]\notag\\
	\leq & \sum_{i=1}^p e^{[i]}_k\left(\widetilde x^{[i]}_{k+1}-\widetilde x^{[i]}_k\right)\notag\\
	= &\ E_k^{\top}(\widetilde X_{k+1} - \widetilde X_{k})\notag
\end{align}
which is \eqref{eq:multi-HIGS SANI inequality}. Since by construction $\widetilde Y_k=\widetilde X_{k+1}$, the multi-HIGS is SANI according to Definition \ref{def:SANI}.
\end{IEEEproof}

\subsection{Bimodal multi-HIGS model}
When each scalar channel $\mc H_i$ in Fig.~\ref{fig:multi_HIGS} is a bimodal HIGS of the form \eqref{eq:bimodal HIGS}, the resulting composite architecture is termed a bimodal multi-HIGS. The state-space representation of this bimodal multi-HIGS is given by:
\begin{subequations}\label{eq:bimodal multi-HIGS}
	\begin{align}
e^{[i]}_k =&\ \theta_i^{\top}E_{k},\\
\widetilde {x}^{[i]}_{k+1} =&\ \widetilde x^{[i]}_{k|\mathrm{int}}, &\text{if}\, (e^{[i]}_k,\widetilde x^{[i]}_{k|\mathrm{int}}) \in \mathcal{F}_i\label{eq:bi-HIGS i integrator mode}\\
\widetilde x^{[i]}_{k+1} =&\ \kappa_i e^{[i]}_k, & \text{if}\, (e^{[i]}_k,\widetilde x^{[i]}_{k|\mathrm{int}}) \notin \mathcal{F}_i\label{eq:bi-HIGS i gain mode}\\
\widetilde X_k =&\ \left[\begin{matrix}
	\widetilde x^{[1]}_k, \widetilde x^{[2]}_k,\cdots, \widetilde x^{[p]}_k
			\end{matrix}\right]^{\top},\\
\widetilde Y_k =&\ \widetilde X_{k+1},\label{eq:bimodal multi-HIGS output}
	\end{align}
\end{subequations}
where $\theta_i\in \mb R^p$ represents the $i$-th standard basis vector, $\widetilde x^{[i]}_{k|\mathrm{int}}=\widetilde x^{[i]}_k+ \omega_{i}e^{[i]}_k$ is the candidate integrator state for the $i$-th channel, and the sector boundary set $\mc F_i$ is defined as:
\begin{align}\label{eq:F_i}
	\mc F_i =& \Bigl\{(e,x_{\mathrm{int}}) \in \mb R^2\Bigm|
	x_{\mathrm{int}} e\geq \tfrac{1}{\kappa_i}x_{\mathrm{int}}^2\Bigr\}.
\end{align}

For each bimodal HIGS $\mc H_i$, Theorem~\ref{theorem:single HIGS SANI} implies that the storage function \eqref{eq:HIGS storage function} with $\kappa_h$ replaced by $\kappa_i$ satisfies \eqref{eq:SANI ineq for HIGS i generic}. Hence Theorem~\ref{theorem:SANI property of multi-HIGS} implies directly that the bimodal multi-HIGS \eqref{eq:bimodal multi-HIGS} is SANI with storage function \eqref{eq:multi-HIGS storage}.

\begin{corollary}
The bimodal multi-HIGS given by \eqref{eq:bimodal multi-HIGS} is an SANI system from the input $E_k$ to the output $\widetilde Y_k$, with the storage function \eqref{eq:multi-HIGS storage}.
\end{corollary}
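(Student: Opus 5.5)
The plan is to reduce the corollary to Theorem~\ref{theorem:SANI property of multi-HIGS}. To do so, we only need to check that every scalar channel of \eqref{eq:bimodal multi-HIGS} satisfies the hypotheses \eqref{eq:HIGS i storage generic}--\eqref{eq:SANI ineq for HIGS i generic}. Fix $i\in\{1,\dots,p\}$ and write $e^{[i]}_k=\theta_i^{\top}E_k$. The update of $\widetilde x^{[i]}$ in \eqref{eq:bi-HIGS i integrator mode}--\eqref{eq:bi-HIGS i gain mode} has exactly the form \eqref{eq:bimodal HIGS}, with $(\omega_h,\kappa_h)$ replaced by $(\omega_i,\kappa_i)$ and with $\mc F$ replaced by $\mc F_i$ from \eqref{eq:F_i}. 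Moreover, the $i$-th channel depends only on $e^{[i]}_k$ and $\widetilde x^{[i]}_k$, with no coupling to the other channels. Theorem~\ref{theorem:single HIGS SANI} therefore applies channel by channel: the storage function $\frac{1}{2\kappa_i}(\widetilde x^{[i]}_k)^2$ satisfies \eqref{eq:SANI ineq for HIGS i generic} for every input and state.

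For a self-contained check I would also verify this inequality directly, splitting into the two modes. Set $a=\widetilde x^{[i]}_k$, $b=\widetilde x^{[i]}_{k+1}$ and $e=e^{[i]}_k$. The claim is equivalent to
\begin{equation*}
\frac{1}{2\kappa_i}(b^2-a^2)-e(b-a)=-\frac{1}{2\kappa_i}(b-a)^2+\frac{1}{\kappa_i}b(b-\kappa_i e)\le 0 .
\end{equation*}
The identity holds because
\begin{equation*}
-\tfrac{1}{2\kappa_i}(b-a)^2+\tfrac{1}{\kappa_i}b(b-\kappa_i e)=\tfrac{1}{2\kappa_i}(b^2-a^2)+\tfrac{1}{\kappa_i}ab-\tfrac{1}{\kappa_i}a^2+\ldots,
\end{equation*}
which after rearranging reduces to the left-hand side; I will confirm this expansion in the write-up. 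In the gain mode we have $b=\kappa_i e$, so the second term vanishes. In the integrator mode, membership in $\mc F_i$ means $be\ge b^2/\kappa_i$, so $b(b-\kappa_i e)\le 0$. In both modes the expression is therefore nonpositive.

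Finally, the output equation \eqref{eq:bimodal multi-HIGS output} gives $\widetilde Y_k=\widetilde X_{k+1}=\widetilde f(\widetilde X_k,E_k)$, so the first condition of Definition~\ref{def:SANI} holds with $\widehat h$ equal to the identity. Invoking Theorem~\ref{theorem:SANI property of multi-HIGS} then yields \eqref{eq:multi-HIGS SANI inequality} with storage function \eqref{eq:multi-HIGS storage}, which is positive definite because $K>0$.

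There is no genuine obstacle in this argument. The only care needed is to confirm that the channels are decoupled, so that the scalar theorem applies to each channel independently.
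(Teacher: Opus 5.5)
Your main argument is the paper's proof. Each channel of \eqref{eq:bimodal multi-HIGS} is an uncoupled copy of \eqref{eq:bimodal HIGS} with data $(\omega_i,\kappa_i,\mathcal F_i)$, so Theorem~\ref{theorem:single HIGS SANI} gives \eqref{eq:SANI ineq for HIGS i generic} channel by channel, and Theorem~\ref{theorem:SANI property of multi-HIGS} finishes. That part is correct and complete on its own.

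Your optional ``self-contained check'' is wrong as written, so do not put it into the write-up in this form.

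\textbf{The identity is false.} Expanding your right-hand side gives $\frac{1}{2\kappa_i}(b^2-a^2)+\frac{1}{\kappa_i}ab-eb$. This differs from the left-hand side by $\frac{a}{\kappa_i}(\kappa_i e-b)$, which vanishes in gain mode but not in integrator mode.

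\textbf{A warning sign.} Your integrator-mode step never uses $\omega_i\ge 0$, yet the inequality fails without it. Take $\kappa_i=1$, $\omega_i=-\tfrac12$, $e=1$, $a=\tfrac12$. Then $b=0$, $(e,b)\in\mathcal F_i$, and the left-hand side equals $\tfrac38>0$. Your right-hand side would give $-\tfrac18$.

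\textbf{The correct identity} is
\[
\frac{1}{2\kappa_i}(b^2-a^2)-e(b-a)=-\frac{1}{2\kappa_i}(b-a)^2+\frac{1}{\kappa_i}(b-a)(b-\kappa_i e).
\]
In gain mode the last term vanishes as before. In integrator mode $b-a=\omega_i e$, so the last term equals $\frac{\omega_i}{\kappa_i}\,e\,(b-\kappa_i e)$. Membership $(e,b)\in\mathcal F_i$ means $b(\kappa_i e-b)\ge 0$, so $b$ lies between $0$ and $\kappa_i e$. This gives $eb\le\kappa_i e^2$, and with $\omega_i\ge 0$ the term is nonpositive.
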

\begin{IEEEproof}
This result follows directly from Theorems~\ref{theorem:single HIGS SANI} and \ref{theorem:SANI property of multi-HIGS}.
\end{IEEEproof}

\subsection{Control of NI systems using bimodal multi-HIGS}
We now show that a linear ZOH-NI plant can be stabilized by applying a suitable bimodal discrete-time multi-HIGS controller in positive feedback. 
Consider the interconnection of a linear ZOH-NI system and a multi-HIGS as shown in Fig.~\ref{fig:MIMO interconnection}, with the settings
\begin{align}
	u_k =&\ \widetilde Y_k;\label{eq:setting u=tilde Y}\\
	E_k =&\ y_k.\label{eq:setting E=y}
\end{align}

\begin{figure}[h!]
\centering
\psfrag{in_1}{$u_k$}
\psfrag{y_1}{$y_k$}
\psfrag{e}{\hspace{-0.1cm}$E_k$}
\psfrag{x_h}{$\widetilde Y_k$}
\psfrag{plant}{\hspace{-0.1cm}$G(z)$}
\psfrag{HIGS}{\hspace{0.25cm}$\widehat{\mc H}$}
\includegraphics[width=8.5cm]{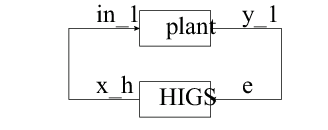}
\caption{Closed-loop interconnection of a linear ZOH-NI system given by (\ref{eq:G(z)}) with transfer matrix $G(z)$ and a discrete-time multi-HIGS $\widehat {\mc H}$.}
\label{fig:MIMO interconnection}
\end{figure}

\begin{theorem}\label{theorem:bimodal stability}
	Consider a system with transfer matrix $G(z)$ and a minimal realization (\ref{eq:G(z)}) where $\det(I-A)\neq 0$. Suppose $G(z)$ is ZOH-NI with a positive definite quadratic storage function. Also, consider a bimodal multi-HIGS $\widehat {\mc H}$ of the form (\ref{eq:bimodal multi-HIGS}) such that $0<\omega_i\leq \kappa_i$ for all $i\in \{1,2,\cdots,p\}$ and $K^{-1}-G(1)>0$ where $K$ is given by (\ref{eq:K_def}). Then the closed-loop interconnection of $G(z)$ and the bimodal multi-HIGS $\widehat {\mc H}$ is asymptotically stable.
\end{theorem}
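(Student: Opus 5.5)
The plan is a Lyapunov argument for the closed loop built from the two storage functions plus a cross term, followed by a LaSalle-type invariance argument. By Lemma~\ref{lemma:LMI new DT-NI}, the plant has storage $V(x_k)=\tfrac12 x_k^\top P x_k$ with $P>0$, $A^\top PA-P\le 0$ and $C=B^\top(I-A)^{-\top}P$. By the Corollary above (Theorems~\ref{theorem:single HIGS SANI} and~\ref{theorem:SANI property of multi-HIGS}), the multi-HIGS has storage $\widehat V(\widetilde X_k)=\tfrac12\widetilde X_k^\top K^{-1}\widetilde X_k$ satisfying \eqref{eq:multi-HIGS SANI inequality}. I would take
\begin{equation*}
W(x_k,\widetilde X_k)=V(x_k)+\widehat V(\widetilde X_k)-y_k^\top \widetilde X_k .
\end{equation*}

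\textbf{Step 1: $W$ is positive definite.} Write $W=\tfrac12[x^\top\ \widetilde X^\top]\,\mathcal M\,[x^\top\ \widetilde X^\top]^\top$ with $\mathcal M=\left[\begin{smallmatrix}P & -C^\top\\ -C & K^{-1}\end{smallmatrix}\right]$. By a Schur complement, $\mathcal M>0$ if and only if $K^{-1}-CP^{-1}C^\top>0$. Substituting $C=B^\top(I-A)^{-\top}P$ gives
\begin{equation*}
CP^{-1}C^\top=B^\top(I-A)^{-\top}P(I-A)^{-1}B=C(I-A)^{-1}B=G(1).
\end{equation*}
Hence the hypothesis $K^{-1}-G(1)>0$ is exactly what makes $W$ positive definite and radially unbounded.

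\textbf{Step 2: $W$ is nonincreasing.} Add the plant inequality \eqref{eq:NNI ineq} and the HIGS inequality \eqref{eq:multi-HIGS SANI inequality}, then substitute $u_k=\widetilde X_{k+1}$ (from \eqref{eq:setting u=tilde Y} and \eqref{eq:bimodal multi-HIGS output}) and $E_k=y_k$. The terms cancel exactly:
\begin{equation*}
W_{k+1}-W_k\le \widetilde X_{k+1}^\top(y_{k+1}-y_k)+y_k^\top(\widetilde X_{k+1}-\widetilde X_k)-y_{k+1}^\top\widetilde X_{k+1}+y_k^\top\widetilde X_k=0 .
\end{equation*}
This gives Lyapunov stability. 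To prepare for Step 3, I would compute the exact HIGS dissipation slack channel by channel, using $\widetilde x_{k+1}=\widetilde x_{k|\mathrm{int}}$ and the sector condition:
\begin{itemize}
\item In integrator mode the slack is $\frac{\omega_i e}{\kappa_i}(\widetilde x_{k+1}-\kappa_i e)-\frac{\omega_i^2e^2}{2\kappa_i}$. The first term is $\le 0$ by membership in $\mathcal F_i$ together with $\omega_i\le\kappa_i$ (this is where that hypothesis enters). Since $\omega_i>0$, the slack vanishes only if $e^{[i]}_k=0$.
\item In gain mode the slack should be $-\frac{1}{2\kappa_i}(\widetilde x^{[i]}_{k+1}-\widetilde x^{[i]}_k)^2$ plus a nonpositive term. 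So equality forces $\widetilde x^{[i]}_{k+1}=\widetilde x^{[i]}_k$.
\end{itemize}

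\textbf{Step 3: asymptotic stability via invariance.} On any trajectory along which $W$ stays constant, each channel has $\widetilde X$ constant at every step, in either mode. Hence $u_k\equiv c$ is constant.
\begin{itemize}
\item If channel $i$ is in integrator mode with $e^{[i]}=0$, then membership $(0,c_i)\in\mathcal F_i$ forces $c_i=0$.
\item In gain mode, $c_i=\kappa_i y^{[i]}$.
\end{itemize}
So either $y^{[i]}_k=c_i/\kappa_i$ or $y^{[i]}_k=0=c_i$; in both cases $c=Ky_k$ and $y_k$ is constant. Now let $\bar x=(I-A)^{-1}Bc$. The plant gives $C A^k(x_0-\bar x)$ constant, so $CA^k(A-I)(x_0-\bar x)=0$ for all $k$. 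Observability (from minimality) and $\det(I-A)\ne0$ then give $x_0=\bar x$ and $y=G(1)c$. Combining with $c=Ky$ gives $(K^{-1}-G(1))Ky=0$, hence $y=0$, $c=0$, $x=0$ and $\widetilde X=0$.

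\textbf{Main obstacle.} The hardest step will be justifying the invariance principle, because the HIGS update is discontinuous across the boundary of $\mathcal F_i$, so the standard discrete-time LaSalle theorem for continuous maps does not apply directly. I would first try to avoid it. Boundedness of $W$ and telescoping give that the summed dissipation slacks are summable, hence tend to zero. From this, $\omega_i^2(e^{[i]}_k)^2\to0$ in integrator steps and $\widetilde X_{k+1}-\widetilde X_k\to0$ in gain steps. I would then redo the Step~3 argument asymptotically: $u_k$ becomes asymptotically constant, observability over a finite window of $n$ steps controls $x_k-\bar x$, and the sector/gain relations hold in the limit. This should show that every $\omega$-limit point is the origin.
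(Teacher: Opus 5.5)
Your proposal has the same structure as the paper's proof. It uses the same $W$ (your cross term $y_k^\top\widetilde X_k$ is the paper's $x_k^\top C^\top\widetilde X_k$), the same Schur-complement identity $CP^{-1}C^\top=G(1)$, and the same exact cancellation giving $W_{k+1}-W_k\le 0$. It also has the same mode-by-mode lossless analysis and the same closing contradiction via observability and $E=G(1)KE$.

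The real difference is the convergence step. The paper characterizes trajectories on which $W_{k+1}=W_k$ for all $k\ge k_0$. It then simply asserts that otherwise the decrement ``cannot remain zero'', so $W$ decreases to zero. That is a LaSalle-type conclusion for a discontinuous closed-loop map, and the paper does not justify it.

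Your summability route does close this step:
\begin{itemize}
\item The per-step HIGS gaps (supply minus storage increase) sum to at most $W_0$.
\item In integrator steps the gap is at least $\omega_i^2(e^{[i]}_k)^2/(2\kappa_i)$, and $|\widetilde x^{[i]}_{k+1}|\le\kappa_i|e^{[i]}_k|$.
\item In gain steps the gap is exactly $(\widetilde x^{[i]}_{k+1}-\widetilde x^{[i]}_k)^2/(2\kappa_i)$, and $\widetilde x^{[i]}_{k+1}=\kappa_i e^{[i]}_k$.
\item Hence $\widetilde X_{k+1}-\widetilde X_k\to 0$ and $u_k-KCx_k\to 0$.
\item Trajectories are bounded because $W$ is a positive definite quadratic form.
\item Along any convergent subsequence, continuity of the linear plant over an $n$-step window reproduces your Step 3 equations at the limit point, so every limit point is the origin.
\end{itemize}
This gives a complete proof at the cost of a short limit argument.

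It also shows that $\omega_i\le\kappa_i$ is never used. Your remark that this hypothesis is needed for the first integrator-mode term is inaccurate. Membership in $\mc F_i$ alone puts $\widetilde x^{[i]}_{k+1}$ between $0$ and $\kappa_i e^{[i]}_k$, so $e^{[i]}_k(\widetilde x^{[i]}_{k+1}-\kappa_i e^{[i]}_k)\le 0$ for any $\omega_i\ge 0$. The paper uses $\omega_i\le\kappa_i$ only to show that a zero-state channel stays in integrator mode. Your relation $c=Ky$, which holds in either mode, makes that unnecessary, so only $\omega_i>0$ is actually needed.
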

\begin{IEEEproof}
According to Lemma \ref{lemma:LMI new DT-NI}, the NI property of the system (\ref{eq:G(z)}) implies that there exists a matrix $P=P^{\top}>0$ such that
\begin{equation}\label{eq:NI lemma}
	A^{\top}PA-P\leq 0 \quad \textnormal{and} \quad C = B^{\top}(I-A)^{-\top}P.
\end{equation}
We construct the following candidate Lyapunov function for the closed-loop interconnection of $G(z)$ and $\widehat {\mc H}$ shown in Fig.~\ref{fig:MIMO interconnection}:
\begin{align}
	W(x_k,\widetilde X_k) =& V(x_k)+\widehat V(\widetilde X_k)-x_k^{\top}C^{\top}\widetilde X_k\notag\\
	=& \frac{1}{2}x_k^{\top}Px_k+\frac{1}{2}\widetilde X_k^{\top}K^{-1}\widetilde X_k -x_k^{\top}C^{\top}\widetilde X_k. \label{eq:W}
\end{align}
Writing $W(x_k,\widetilde X_k)$ in matrix form, we have 
\begin{equation*}
	W(x_k,\widetilde X_k) = \frac{1}{2}\begin{bmatrix}
		x_k & \widetilde X_k
	\end{bmatrix}\begin{bmatrix}
		P & -C^{\top} \\-C & K^{-1}
	\end{bmatrix}\begin{bmatrix}
		x_k \\ \widetilde X_k
	\end{bmatrix}.
\end{equation*}
Let $\widehat P = \begin{bmatrix} 
		P & -C^{\top} \\-C & K^{-1}
	\end{bmatrix}$. Since $P>0$, the Schur complement theorem implies that $\widehat P>0$ if and only if
	\begin{equation}\label{eq:K condition}
		K^{-1}-CP^{-1}C^{\top}>0.
	\end{equation}
According to (\ref{eq:NI lemma}), we have $B = (I-A)P^{-1}C^{\top}$. Hence,
\begin{equation*}
	G(1) = C(I-A)^{-1}B = CP^{-1}C^{\top}.
\end{equation*}
The condition (\ref{eq:K condition}) can be rewritten as
\begin{equation*}
	K^{-1}-G(1)>0,
\end{equation*}
which is equivalent to the given condition $G(1)K<I$. Therefore, $\widehat P>0$ and the function $W(x_k,\widetilde X_k)$ is positive definite.

Now we take the difference between $W(x_{k+1},\widetilde X_{k+1})$ and $W(x_k,\widetilde X_k)$ to obtain
\begin{align}
	W(&x_{k+1},\widetilde X_{k+1})-W(x_k,\widetilde X_k)\notag\\
	 =&\ V(x_{k+1})+\widehat V(\widetilde X_{k+1})-x_{k+1}^{\top}C^{\top}\widetilde X_{k+1} - V(x_k)\notag\\
	 &-\widehat V(\widetilde X_k) + x_k^{\top}C\widetilde X_k\notag\\
	 \leq &\ u_k^{\top}(y_{k+1}-y_{k}) + E_k^{\top}(\widetilde X_{k+1}-\widetilde X_k)-x_{k+1}^{\top}C^{\top}\widetilde X_{k+1}\notag\\
	 &+ x_k^{\top}C\widetilde X_k\notag\\
	 =&\ \widetilde X_{k+1}^{\top}(y_{k+1}-y_{k})+y_k^{\top}(\widetilde X_{k+1}-\widetilde X_k)-y_{k+1}^{\top}\widetilde X_{k+1}\notag\\
	 &+y_k^{\top}\widetilde X_k\notag\\
	 =&\ 0,\label{eq:W difference}
\end{align}
where (\ref{eq:G(z) output eq}), (\ref{eq:bimodal multi-HIGS output}), (\ref{eq:setting u=tilde Y}) and (\ref{eq:setting E=y}) are also used.
This implies that the closed-loop system is Lyapunov stable. Moreover, the equality in (\ref{eq:W difference}) holds only if
\begin{align}
	V(x_{k+1}) - V(x_k)=&\ u_k^{\top}(y_{k+1}-y_{k}),\notag\\ 
\widehat V(\widetilde X_{k+1})-\widehat V(\widetilde X_k)=&\ E_k^{\top}(\widetilde X_{k+1}-\widetilde X_k).\label{eq:multi-HIGS lossless}
\end{align}
According to the proof of Theorem \ref{theorem:SANI property of multi-HIGS}, if (\ref{eq:multi-HIGS lossless}) holds, then for all $i \in \{1,2,\cdots, p\}$, we have
\begin{equation}\label{eq:HIGS i lossless}
\widetilde V_i\left(\widetilde x^{[i]}_{k+1}\right)-\widetilde V_i\left(\widetilde x^{[i]}_k\right)=e^{[i]}_k\left(\widetilde x^{[i]}_{k+1}-\widetilde x^{[i]}_k\right),
\end{equation}
with
\begin{equation}\label{eq:HIGS i storage}
\widetilde V_i\left(\widetilde x^{[i]}_k\right) = \frac{1}{2\kappa_i}\left(\widetilde x^{[i]}_k\right)^2.
\end{equation}
Substituting (\ref{eq:HIGS i storage}) in (\ref{eq:HIGS i lossless}), we obtain
\begin{equation}\label{eq:HIGS i lossless equation}
	\frac{1}{2\kappa_i}\left[\left(\widetilde x^{[i]}_{k+1}\right)^2-\left(\widetilde x^{[i]}_k\right)^2\right] = e^{[i]}_k\left(\widetilde x^{[i]}_{k+1}-\widetilde x^{[i]}_k\right).
\end{equation}
We discuss the conditions required for (\ref{eq:HIGS i lossless equation}) to hold in the cases of integrator mode and gain mode separately.

\medskip
\noindent
\textbf{Case 1 (integrator mode):} In this case, we have $\left(e^{[i]}_k,\widetilde x^{[i]}_{k|\mi} \right)\in \mc F_i$, and $\widetilde x^{[i]}_{k+1}$ is given by (\ref{eq:bi-HIGS i integrator mode}). Substituting (\ref{eq:bi-HIGS i integrator mode}) into (\ref{eq:HIGS i lossless equation}) yields
\begin{equation}\label{eq:integrator mode lossless equation}
	(\omega_i-2\kappa_i)\left(e^{[i]}_k\right)^2+2\widetilde x^{[i]}_ke^{[i]}_k=0.
\end{equation}
We now distinguish the cases $e^{[i]}_k \neq 0$ and $e^{[i]}_k = 0$. If $e^{[i]}_k \neq 0$, then (\ref{eq:integrator mode lossless equation}) implies
\begin{equation}\label{eq:tilde x in case1a}
	\widetilde x^{[i]}_k = \left(\kappa_i-\frac{\omega_i}{2}\right) e^{[i]}_k.
\end{equation}
Substituting (\ref{eq:tilde x in case1a}) in the inequality in (\ref{eq:F_i}) yields
\begin{equation*}
	\left(\kappa_i+\frac{\omega_i}{2}\right)\left(e^{[i]}_k\right)^2 \geq \frac{1}{\kappa_i}\left(\kappa_i+\frac{\omega_i}{2}\right)^2\left(e^{[i]}_k\right)^2,
\end{equation*}
which simplifies to $\omega_i\leq 0$ and hence contradicts the fact that $\omega_i>0$. Therefore, the case $e^{[i]}_k \neq 0$ cannot hold in the integrator mode. Thus, in integrator mode we must have $e^{[i]}_k = 0$. In this case, (\ref{eq:integrator mode lossless equation}) holds, and substituting $e^{[i]}_k = 0$ into \eqref{eq:F_i} yields $\widetilde x^{[i]}_k = 0$ and hence $\widetilde x^{[i]}_{k+1} = 0$. We show in the following that in this case, the HIGS channel $\mc H_i$ remains in integrator mode with zero state for all future times, regardless the values of future inputs. Since $\widetilde x^{[i]}_{k+1}=0$, then $\widetilde x^{[i]}_{k+1|\mi}=\omega_i \widetilde e^{[i]}_{k+1}$. According to (\ref{eq:F_i}), $\left(e^{[i]}_{k+1},\omega_i \widetilde e^{[i]}_{k+1}\right)\in \mc F_i$ if $\omega_i\left(e^{[i]}_{k+1}\right)^2\geq \frac{1}{\kappa_i}\omega_i^2\left(e^{[i]}_{k+1}\right)^2$. This holds for any $e_{k+1}^{[i]}$ because $0<\omega_i\leq \kappa_i$.

\medskip
\noindent
\textbf{Case 2 (gain mode):} In this case, we have $\left(e^{[i]}_k,\widetilde x^{[i]}_{k|\mi} \right)\notin \mc F_i$; that is,
\begin{equation}\label{eq:not in F_i}
	\left(\widetilde x^{[i]}_k+\omega_i e^{[i]}_k\right) e^{[i]}_k < \frac{1}{\kappa_i}\left(\widetilde x^{[i]}_k+\omega_i e^{[i]}_k\right)^2.
\end{equation}
Also, $\widetilde x^{[i]}_{k+1}$ is given by (\ref{eq:bi-HIGS i gain mode}). Substituting (\ref{eq:bi-HIGS i gain mode}) into (\ref{eq:HIGS i lossless equation}) yields
\begin{equation}\label{eq:tilde x case2}
	\widetilde x^{[i]}_k = \kappa_i e^{[i]}_k.
\end{equation}
Substituting (\ref{eq:tilde x case2}) into (\ref{eq:not in F_i}) shows that $e^{[i]}_k\neq 0$. Therefore, we have
\begin{equation*}
	\widetilde x^{[i]}_{k+1} = \kappa_i e^{[i]}_k = \widetilde x^{[i]}_k \neq 0.
\end{equation*}
From the analysis of the above two cases, we conclude that when $W(x_{k+1},\widetilde X_{k+1})-W(x_k,\widetilde X_k)=0$, we have the following conditions. For each HIGS $\mc H_i$: (i) if $e^{[i]}_k=0$, then it remains in the integrator mode and $\widetilde x^{[i]}_k=0$ for all future $k$, or (ii) if $e^{[i]}_k\neq 0$, it operates in the gain mode and $\widetilde x^{[i]}_k\neq 0$ are constant for all future $k$. According to (\ref{eq:bimodal multi-HIGS output}), $e^{[i]}_k$ also remains constant for all future $k$. This implies that it will remain in the gain mode for all future $k$.

Now we consider the case when $W(x_{k+1},\widetilde X_{k+1})-W(x_k,\widetilde X_k)=0$ for all $k\geq k_0$. In this case, each individual HIGS $\mc H_i$ will either remain in the integrator mode or remain in the gain mode. Also, according to the conditions in cases (i) and (ii), regardless of the active mode of each individual HIGS, its input and state will remain constant; i.e.,
\begin{align*}
	\widetilde X_k = \widetilde X_{k+1} = &\ \widetilde X_{k+2} = \cdots,\\
	E_k = E_{k+1} = &\ E_{k+2}=\cdots, \label{eq:E same}\\
	\widetilde X_k = &\ KE_k.
\end{align*}
Using the setting (\ref{eq:setting E=y}), we have
\begin{equation}\label{eq:y same}
	y_k = y_{k+1} = y_{k+2}=\cdots
\end{equation}
for all $k\geq k_0$. Since the system (\ref{eq:G(z)}) is minimal, it is observable. Therefore, (\ref{eq:y same}) implies that the state of the plant remains constant:
\begin{equation*}
	x_k = x_{k+1} = x_{k+2} = \cdots,\qquad \forall k\geq k_0.
\end{equation*}
Using $x_{k+1} = Ax_k+Bu_k$ with $u_k=\widetilde Y_k=\widetilde X_{k+1}=KE_{k+1}$ and $E_{k+1}=y_{k+1}=Cx_{k+1}=Cx_k$, we obtain, for all $k\ge k_0$,
\begin{equation}\label{eq:x_k E_k relation}
	x_k = Ax_k+BKE_k.
\end{equation}
Solving (\ref{eq:x_k E_k relation}), we get
\begin{equation*}
	x_k = (I-A)^{-1}BKE_k.
\end{equation*}
Hence,
\begin{equation}\label{eq:E = GKE}
	E_k = y_k = Cx_k = C(I-A)^{-1}BKE_k=G(1)KE_k.
\end{equation}
If $E_k\neq 0$ for some $k\geq k_0$, then \eqref{eq:E = GKE} implies $G(1)K=I$, contradicting the fact that $K^{-1}-G(1)>0$. Therefore, the only possible trajectory that remains lossless from some time onward satisfies $E_k=0$ and $y_k=0$ for all $k\geq k_0$, which by observability implies $x_k=0$ for all $k\geq k_0$.

Hence, except for the equilibrium, $W(x_{k+1},\widetilde X_{k+1})-W(x_k,\widetilde X_k)$ cannot remain zero and it will eventually decrease to zero, where we then have $x_k=0$ and $\widetilde X_k=0$. This proves asymptotic stability.
\end{IEEEproof}

\section{TRIMODAL DISCRETE-TIME HIGS AND NI CONTROL}\label{Trimodal discrete-time HIGS and NI control}
In this section, we investigate NI control using a discrete-time trimodal HIGS. First, we provide the system model for a single trimodal HIGS and demonstrate that it possesses the SANI property. Next, we introduce the trimodal multi-HIGS architecture and note that it retains the SANI property as a direct consequence of Theorem~\ref{theorem:SANI property of multi-HIGS}. Finally, we analyze the closed-loop stability for the interconnection of a MIMO ZOH-NI plant and a trimodal multi-HIGS controller.

\subsection{Trimodal single HIGS model and SANI property}
A trimodal version of discrete-time HIGS is introduced in \cite{sharif2024analysis}, which provides an alternative model to the bimodal version. The trimodal HIGS differs from the bimodal version by providing a refined way of projection into the sector of operation. Specifically, given an input $e_k$, if the integrator mode dynamics tend to provide an output outside the sector $[0,\kappa_h]$, the bimodal HIGS projects the output to the line $x_h=\kappa_he$. However, the sector $\mc F$ has two boundaries: $x_h = \kappa_h e$ and $x_h = 0$.

For a continuous-time HIGS, the dynamics of an integrator can only exit the sector $[0,\kappa_h]$ by crossing the boundary $x_h=\kappa_he$. Conversely, in the discrete-time case, a finite integration step can cause the dynamics to land in the second or fourth quadrants of the input-output plane, making the state closer to the zero boundary $x_h = 0$. To address this, the trimodal HIGS projects such states back to $x_h=0$ rather than $x_h=\kappa_h e$. We formulate the trimodal discrete-time HIGS model below. We adapt the indexing from the model in \cite{sharif2024analysis} to align with our nonlinear state-space description in \eqref{eq:DT_nonlinear}, rewriting the operating regions purely in terms of the current state and input:
\begin{equation}\label{eq:trimodal HIGS}
		\mathcal{H}:
		\begin{cases}
			x_h(k+1) = x_{\mi}(k), & \text{if } (e(k),x_{\mi}(k)) \in \mathcal{F},\\[0.5mm]
			x_h(k+1) = \kappa_he(k), & \text{if } (e(k),x_{\mi}(k)) \in \mathcal{F}_a,\\[0.5mm]
            x_h(k+1) = 0, & \text{if } (e(k),x_{\mi}(k)) \in \mc F_b,\\[0.5mm]
			y_h(k) = x_h(k+1),
		\end{cases}
\end{equation}
where $e(k),y_h(k),x_h(k)\in \mb R$ are the system input, output, and state, respectively, and $x_{\mi}(k)$ is the candidate integrator value given by the same formula as in \eqref{eq:x int}. The constant parameters $\omega_h\geq 0$ and $\kappa_h>0$ are the integrator frequency and the gain value, respectively. Here, $\mc F$ remains the same as in (\ref{eq:F}), and $\mc F_a$ and $\mc F_b$ are given by
\begin{align}
    \mc F_a =&\ \{(e(k),x_{\mi}(k))\in \mb R^2 \mid x_{\mi}(k)e(k)>\kappa_h e(k)^2\},\notag\\
   \mc F_b =&\ \Bigl\{(e(k),x_{\mi}(k))\in \mb R^2 \Bigm| \notag\\
   \quad & x_{\mi}(k)e(k)<0 \ \vee\ \bigl(e(k)=0 \wedge x_{\mi}(k)\neq 0\bigr)\Bigr\}.\label{eq:Fb}
\end{align}
To streamline the notation in the subsequent analysis, we continue to denote $e(k)$, $x_h(k)$, $x_{\mi}(k)$ and $y_h(k)$ by $e_k$, $\widetilde x_k$, $\widetilde x_{k|\mi}$ and $\widetilde y_k$, respectively.

\begin{remark}
The trimodal HIGS model in \eqref{eq:trimodal HIGS} represents an equivalent state-space realization of the controller presented in \cite{sharif2024analysis}. While the original model expresses the switching conditions using both $e(k)$ and $e(k-1)$, the representation formulated here defines the operating regions entirely in terms of the current state and input, eliminating the explicit dependence on $e(k-1)$. The three defined regions $\mathcal F$, $\mathcal F_a$, and $\mathcal F_b$ correspond directly to the three projection operations described in \cite{sharif2024analysis}: pure integration, projection onto the gain boundary $x_h=\kappa_h e$, and projection onto the zero boundary $x_h=0$.
\end{remark}

\begin{theorem}\label{theorem:single trimodal HIGS SANI}
    The trimodal HIGS of the form (\ref{eq:trimodal HIGS}) is an SANI system with the storage function (\ref{eq:HIGS storage function}) satisfying (\ref{eq:NNI ineq for HIGS aux}).
\end{theorem}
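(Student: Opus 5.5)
The plan is to check the dissipation inequality (\ref{eq:NNI ineq for HIGS aux}) directly, one operating mode at a time. The two structural requirements of Definition~\ref{def:SANI} are immediate. The output satisfies $\widetilde y_k=\widetilde x_{k+1}$, so we may take $\widehat h$ to be the identity. The storage function (\ref{eq:HIGS storage function}) is continuous and positive definite because $\kappa_h>0$.

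Multiplying the target inequality by $2\kappa_h>0$ turns it into
\begin{equation*}
\widetilde x_{k+1}^2-\widetilde x_k^2-2\kappa_h e_k(\widetilde x_{k+1}-\widetilde x_k)\le 0 .
\end{equation*}
Completing the square in $\widetilde x_k$ gives the equivalent form
\begin{equation*}
(\widetilde x_k-\kappa_h e_k)^2 \ \ge\ \widetilde x_{k+1}^2-2\kappa_h e_k\widetilde x_{k+1}+\kappa_h^2e_k^2-\kappa_h^2e_k^2+\ldots,
\end{equation*}
which is cleaner to write as
\begin{equation*}
\widetilde x_{k+1}(\widetilde x_{k+1}-2\kappa_h e_k)\le \widetilde x_k(\widetilde x_k-2\kappa_h e_k).
\end{equation*}
The right-hand side equals $(\widetilde x_k-\kappa_h e_k)^2-\kappa_h^2e_k^2\ge -\kappa_h^2e_k^2$. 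We then verify this inequality in each of the three regions, noting that $\mc F$, $\mc F_a$ and $\mc F_b$ partition $\mb R^2$.

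\textbf{Gain mode ($\mc F_a$).} Here $\widetilde x_{k+1}=\kappa_h e_k$, so the left-hand side is $-\kappa_h^2 e_k^2$. The inequality therefore holds by the lower bound above, with no need to use the region condition. This is exactly the bimodal gain-mode argument of Theorem~\ref{theorem:single HIGS SANI}.

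\textbf{Zeroing mode ($\mc F_b$).} Here $\widetilde x_{k+1}=0$, so we must show $\widetilde x_k(\widetilde x_k-2\kappa_h e_k)\ge 0$.
\begin{itemize}
\item[] If $e_k=0$, this reads $\widetilde x_k^2\ge 0$.
\item[] If $e_k\neq0$, the region condition gives $(\widetilde x_k+\omega_h e_k)e_k<0$, hence $\widetilde x_k e_k<-\omega_h e_k^2\le 0$. Then $\widetilde x_k^2-2\kappa_h\widetilde x_k e_k\ge -2\kappa_h\widetilde x_k e_k>0$.
\end{itemize}
So this case is easy once the region condition is used.

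\textbf{Integrator mode ($\mc F$).} Here $\widetilde x_{k+1}=\widetilde x_k+\omega_h e_k$. The inequality reduces to
\begin{equation*}
\omega_h e_k\bigl(2\widetilde x_k+\omega_h e_k-2\kappa_h e_k\bigr)\le 0,
\end{equation*}
which can be rewritten as
\begin{equation*}
\omega_h e_k\bigl(2(\widetilde x_{k|\mi}-\kappa_h e_k)-\omega_h e_k\bigr)\le 0.
\end{equation*}
The sector condition $\widetilde x_{k|\mi}e_k\ge \widetilde x_{k|\mi}^2/\kappa_h$ places $\widetilde x_{k|\mi}$ between $0$ and $\kappa_h e_k$. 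This gives $e_k(\widetilde x_{k|\mi}-\kappa_h e_k)\le 0$. Together with $-\omega_h^2e_k^2\le0$ and $\omega_h\ge0$, the desired sign follows.

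This integrator-mode computation is the only genuinely algebraic step. It is also the same as in the bimodal case, since $\mc F$ is unchanged, so I expect no real obstacle there. The main point needing care is that $\mc F$, $\mc F_a$ and $\mc F_b$ cover all of $\mb R^2$. This requires a short check of the boundary cases $e_k=0$ and $\widetilde x_{k|\mi}=0$, so that every $(e_k,\widetilde x_k)$ falls into one of the three verified cases. Combining the cases yields (\ref{eq:NNI ineq for HIGS aux}) for all inputs and states, and hence the SANI property.
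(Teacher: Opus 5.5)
Your proof is correct and follows the same mode-by-mode route as the paper. Your zeroing-mode argument, which uses $(\widetilde x_k+\omega_h e_k)e_k<0$ to get $\widetilde x_k e_k<0$, is exactly the paper's; the only difference is that you verify the integrator and gain modes yourself, whereas the paper notes that on $\mc F\cup\mc F_a$ the update law coincides with the bimodal one and cites Theorem~\ref{theorem:single HIGS SANI}. As a minor presentational fix, the intermediate ``completing the square'' display with the trailing $\ldots$ is garbled and should be replaced by the equivalent form $(\widetilde x_{k+1}-\kappa_h e_k)^2\le(\widetilde x_k-\kappa_h e_k)^2$, which is what your next line amounts to.
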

\begin{IEEEproof}
When $(e(k),x_{\mi}(k))\in \mc F \cup \mc F_a$, the update law for $x_h(k+1)$ is identical to that of the bimodal HIGS. Consequently, the SANI property for these regions follows directly from Theorem~\ref{theorem:single HIGS SANI}. 

For the zeroing mode where $(e(k),x_{\mi}(k))\in \mc F_b$, the state update is $x_h(k+1)=0$. In this case, the required SANI inequality \eqref{eq:NNI ineq for HIGS aux} simplifies to:
\begin{equation}\label{eq:trimodal SANI Fb NI condition}
        \frac{1}{2\kappa_h}\widetilde x_k^2\geq e_k\widetilde x_k.
\end{equation}
If $e_k=0$, \eqref{eq:trimodal SANI Fb NI condition} is always satisfied. If $e_k\neq 0$, dividing by $e_k^2$ allows \eqref{eq:trimodal SANI Fb NI condition} to be rewritten as a quadratic inequality:
\begin{equation*}
\left(\frac{\widetilde x_k}{e_k}\right)^2 - 2\kappa_h\left(\frac{\widetilde x_k}{e_k}\right)\ge 0,
\end{equation*}
which holds when $\frac{\widetilde x_k}{e_k}\leq 0$ or $\frac{\widetilde x_k}{e_k}\geq 2\kappa_h$. The condition $x_{\mi}(k)e_k<0$ in $\mc F_b$ given by (\ref{eq:Fb}) implies that
\begin{equation*}
\widetilde x_k e_k+\omega_h e_k^2<0,
\end{equation*}
which, in the case $e_k\neq 0$, implies $\frac{\widetilde x_k}{e_k}+\omega_h<0$. Hence, $\frac{\widetilde x_k}{e_k}<0$, and (\ref{eq:trimodal SANI Fb NI condition}) is satisfied. Therefore, the trimodal HIGS is an SANI system.
\end{IEEEproof}

\subsection{Trimodal multi-HIGS model and SANI property}
A trimodal discrete-time multi-HIGS is constructed by connecting $p$ trimodal HIGS of the form (\ref{eq:trimodal HIGS}) in parallel, as shown in Fig.~\ref{fig:multi_HIGS}. The stacked input, output and state of the multi-HIGS $\widehat{\mc H}$ are still given by (\ref{eq:multi_HIGS_input})--(\ref{eq:multi_HIGS_state}). 

Assume that the $i$-th HIGS $\mc H_i$ has an integrator frequency $\omega_i\geq 0$ and a gain value $\kappa_i>0$, with scalar input $e^{[i]}_k$, output $\widetilde y^{[i]}_k$ and state $\widetilde x^{[i]}_k$. The associated candidate integrator value for this channel is:
\begin{equation}\label{eq:trimodal multi-HIGS int}
\widetilde x^{[i]}_{k|\mi} = \widetilde x^{[i]}_k + \omega_i e^{[i]}_k.
\end{equation}
The operating sector regions for the $i$-th channel are defined as:
\begin{align*}
\mc F_i =&\ \{(e,x_{\mi})\in\mb R^2 \mid x_{\mi}e\ge\tfrac{1}{\kappa_i}x_{\mi}^2\},\\
\mc F_{a,i} =&\ \{(e,x_{\mi})\in\mb R^2 \mid x_{\mi}e>\kappa_i e^2\},\\
\mc F_{b,i} =&\ \{(e,x_{\mi})\in\mb R^2 \mid x_{\mi}e<0 \vee (e=0\wedge x_{\mi}\neq 0)\}.
\end{align*}
The state-space representation of the trimodal multi-HIGS can then be written as:
\begin{subequations}\label{eq:trimodal multi-HIGS}
	\begin{align}
e^{[i]}_k =&\ \theta_i^{\top} E_{k},\\
\widetilde x^{[i]}_{k+1}
  =&\ \widetilde x^{[i]}_{k|\mi},  &&\hspace{-1cm}\text{if } (e^{[i]}_k,\widetilde x^{[i]}_{k|\mi}) \in \mathcal F_i, \\
\widetilde x^{[i]}_{k+1}
  =&\ \kappa_i e^{[i]}_k,         &&\hspace{-1cm} \text{if } (e^{[i]}_k,\widetilde x^{[i]}_{k|\mi}) \in \mathcal F_{a,i}, \\
\widetilde x^{[i]}_{k+1}
  =&\ 0,  && \hspace{-1cm}\text{if } (e^{[i]}_k,\widetilde x^{[i]}_{k|\mi}) \in \mathcal F_{b,i},\\
\widetilde X_k =&\ \left[
	\widetilde x^{[1]}_k, \widetilde x^{[2]}_k,\cdots, \widetilde x^{[p]}_k
			\right]^\top ,\\
\widetilde Y_k =&\ \widetilde X_{k+1}.\label{eq:trimodal multi-HIGS output}
	\end{align}
\end{subequations}

By Theorem~\ref{theorem:single trimodal HIGS SANI}, each trimodal HIGS $\mc H_i$ has the storage function
\begin{equation}\label{eq:HIGS i storage trimodal}
\widetilde V_i\left(\widetilde x^{[i]}_k\right) = \frac{1}{2\kappa_i}\left(\widetilde x^{[i]}_k\right)^2
\end{equation}
and satisfies \eqref{eq:SANI ineq for HIGS i generic}. Thus, the trimodal multi-HIGS \eqref{eq:trimodal multi-HIGS} is also SANI with the storage \eqref{eq:multi-HIGS storage} by Theorem~\ref{theorem:SANI property of multi-HIGS}.
\begin{corollary}\label{corollary:trimodal SANI}
The trimodal multi-HIGS given by \eqref{eq:trimodal multi-HIGS} is an SANI system from the input $E_k$ to the output $\widetilde Y_k$, with the storage function \eqref{eq:multi-HIGS storage}.
\end{corollary}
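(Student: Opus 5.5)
The corollary follows by combining Theorem~\ref{theorem:single trimodal HIGS SANI} with Theorem~\ref{theorem:SANI property of multi-HIGS}. The plan is to check, channel by channel, the two hypotheses of Theorem~\ref{theorem:SANI property of multi-HIGS}, and then read off its conclusion.

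First, fix an index $i\in\{1,\dots,p\}$ and look at the $i$-th channel of \eqref{eq:trimodal multi-HIGS}. It has scalar input $e^{[i]}_k=\theta_i^\top E_k$ and state $\widetilde x^{[i]}_k$. Its update is the trimodal law \eqref{eq:trimodal HIGS} with $(\omega_h,\kappa_h)$ replaced by $(\omega_i,\kappa_i)$. The regions $\mc F$, $\mc F_a$, $\mc F_b$ are replaced by $\mc F_i$, $\mc F_{a,i}$, $\mc F_{b,i}$, and the candidate value is \eqref{eq:trimodal multi-HIGS int}. Since these substitutions are literal, Theorem~\ref{theorem:single trimodal HIGS SANI} applies to each channel. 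It gives the storage function \eqref{eq:HIGS i storage trimodal}, which coincides with \eqref{eq:HIGS i storage generic}, and the dissipation inequality \eqref{eq:SANI ineq for HIGS i generic} for arbitrary $e^{[i]}_k$ and $\widetilde x^{[i]}_k$.

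One detail needs checking here. The inequality in Theorem~\ref{theorem:single trimodal HIGS SANI} is required to hold for arbitrary state and input of a single HIGS. In the multi-HIGS, each $e^{[i]}_k$ is just a coordinate of an arbitrary $E_k$, and the channels do not interact. So the per-channel inequality holds for every $(\widetilde X_k,E_k)$. I expect this bookkeeping to be the only mildly delicate step, namely confirming that the parallel structure introduces no coupling. It is immediate from the decoupled update laws in \eqref{eq:trimodal multi-HIGS}.

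Finally, invoke Theorem~\ref{theorem:SANI property of multi-HIGS}. Summing the per-channel inequalities gives \eqref{eq:multi-HIGS SANI inequality} with storage function \eqref{eq:multi-HIGS storage}, where $K$ is as in \eqref{eq:K_def}. Storage positive definiteness holds because $\kappa_i>0$, so $K^{-1}>0$. The output condition $\widetilde Y_k=\widetilde X_{k+1}$ is \eqref{eq:trimodal multi-HIGS output}. Thus both conditions of Definition~\ref{def:SANI} are met with $\widehat h$ equal to the identity, since $\widetilde X_{k+1}$ is a function of the current state $\widetilde X_k$ and input $E_k$. This completes the argument.
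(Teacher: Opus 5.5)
Your proposal is correct and matches the paper's proof, which simply invokes Theorem~\ref{theorem:single trimodal HIGS SANI} for each channel and then Theorem~\ref{theorem:SANI property of multi-HIGS}. Your additional checks (that the channels are decoupled, that $K^{-1}>0$, and that $\widehat h$ is the identity) spell out details the paper leaves implicit.
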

\begin{IEEEproof}
This result follows directly from Theorems~\ref{theorem:single trimodal HIGS SANI} and \ref{theorem:SANI property of multi-HIGS}.
\end{IEEEproof}

\subsection{Control of NI systems using trimodal multi-HIGS}
We now show that a linear ZOH-NI plant can also be stabilized by a trimodal discrete-time multi-HIGS in positive feedback. We consider the same interconnection as in Fig.~\ref{fig:MIMO interconnection} with the settings \eqref{eq:setting u=tilde Y} and \eqref{eq:setting E=y}.
\begin{theorem}
\label{theorem:trimodal stability}
  Consider a system with transfer matrix $G(z)$ and a minimal realization \eqref{eq:G(z)} where $\det(I-A)\neq 0$. Suppose $G(z)$ is ZOH-NI with a positive definite quadratic storage function. Also, consider a trimodal multi-HIGS $\widehat {\mc H}$ of the form (\ref{eq:trimodal multi-HIGS}) such that $0<\omega_i\leq \kappa_i$ for all $i\in \{1,2,\cdots,p\}$ and $K^{-1} - G(1) > 0$ where $K$ is given by (\ref{eq:K_def}). Then the closed-loop interconnection of $G(z)$ and the trimodal multi-HIGS $\widehat{\mc H}$ is asymptotically stable.
\end{theorem}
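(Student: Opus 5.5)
The proof will mirror that of Theorem~\ref{theorem:bimodal stability}. By Lemma~\ref{lemma:LMI new DT-NI} there is $P=P^\top>0$ with $A^\top PA-P\le 0$ and $C=B^\top(I-A)^{-\top}P$. We use the same candidate Lyapunov function $W(x_k,\widetilde X_k)$ in \eqref{eq:W}. Its positive definiteness depends only on $P$, $C$ and $K$: the Schur complement gives $\widehat P>0$ if and only if $K^{-1}-CP^{-1}C^\top=K^{-1}-G(1)>0$, which is assumed. The difference computation \eqref{eq:W difference} uses only the ZOH-NI inequality of the plant, the SANI inequality \eqref{eq:multi-HIGS SANI inequality}, and the interconnection \eqref{eq:setting u=tilde Y}--\eqref{eq:setting E=y}. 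By Corollary~\ref{corollary:trimodal SANI} the SANI inequality holds verbatim for the trimodal multi-HIGS, so $W_{k+1}-W_k\le 0$ and the closed loop is Lyapunov stable.

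For asymptotic stability I would again characterize when $W_{k+1}-W_k=0$. As in the bimodal proof, equality forces, for every channel $i$, the lossless identity \eqref{eq:HIGS i lossless equation}. The integrator-mode case ($\mc F_i$) and the gain-mode case ($\mc F_{a,i}$) carry over unchanged from Cases 1 and 2 of Theorem~\ref{theorem:bimodal stability}.
\begin{itemize}
\item In $\mc F_i$, $\omega_i>0$ forces $e^{[i]}_k=0$ and $\widetilde x^{[i]}_k=\widetilde x^{[i]}_{k+1}=0$. Moreover, since $0<\omega_i\le\kappa_i$, a zero state with any input keeps the candidate $\omega_i e$ inside $\mc F_i$, so the channel stays in integrator mode with zero state.
\item In $\mc F_{a,i}$, equality forces $\widetilde x^{[i]}_k=\kappa_ie^{[i]}_k=\widetilde x^{[i]}_{k+1}$.
\end{itemize}

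The new step is the zeroing mode $\mc F_{b,i}$. There $\widetilde x^{[i]}_{k+1}=0$, and equality reads $\frac{1}{2\kappa_i}(\widetilde x^{[i]}_k)^2=e^{[i]}_k\widetilde x^{[i]}_k$, i.e.\ $\widetilde x^{[i]}_k=0$ or $\widetilde x^{[i]}_k=2\kappa_ie^{[i]}_k$.
\begin{itemize}
\item If $e^{[i]}_k\ne0$, the $\mc F_{b,i}$ condition gives $\widetilde x^{[i]}_k/e^{[i]}_k<-\omega_i<0$, which excludes both $\widetilde x^{[i]}_k=0$ and $\widetilde x^{[i]}_k=2\kappa_ie^{[i]}_k$. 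So this subcase is impossible.
\item If $e^{[i]}_k=0$, membership in $\mc F_{b,i}$ requires $\widetilde x^{[i]}_{k|\mi}=\widetilde x^{[i]}_k\ne0$, while equality forces $\widetilde x^{[i]}_k=0$, a contradiction. So this subcase is impossible too.
\end{itemize}
Hence equality never occurs in zeroing mode. I expect this to be the main obstacle, mainly because of care with the disjunctive definition of $\mc F_{b,i}$. A related point to check is that once a channel is at zero state it cannot enter $\mc F_{b,i}$: the candidate is $\omega_ie$, and $\omega_ie^2\ge0$ while $e=0$ gives candidate $0$, so it remains in $\mc F_i$.

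Consequently, if $W_{k+1}-W_k=0$ for all $k\ge k_0$, each channel either sits in integrator mode with zero state or sits in gain mode with constant state $\kappa_ie^{[i]}_k$. In both cases $\widetilde X_k$ is constant and $\widetilde X_k=KE_k$ (the relation holds trivially for zero channels since their $e^{[i]}=0$). The remainder is verbatim from the bimodal proof:
\begin{itemize}
\item $y_k=E_k$ is constant, so by observability $x_k$ is constant.
\item From $x_k=Ax_k+BKE_k$ we get $E_k=G(1)KE_k$.
\item Since $K^{-1}-G(1)>0$, $I-G(1)K$ is nonsingular, so $E_k=0$.
\item Then $y_k=0$, so $x_k=0$ by observability, and $\widetilde X_k=KE_k=0$.
\end{itemize}
Thus $W$ cannot stay constant except at the origin, and the LaSalle-type argument used for Theorem~\ref{theorem:bimodal stability} yields asymptotic stability.
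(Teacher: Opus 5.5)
Your proposal is correct and follows essentially the same route as the paper. You reuse the Lyapunov function $W$ from the bimodal proof, obtain $W_{k+1}-W_k\le 0$ via Corollary~\ref{corollary:trimodal SANI}, and rule out equality in the zeroing mode by splitting on $e^{[i]}_k=0$ (where $\mc F_{b,i}$ forces $\widetilde x^{[i]}_k\neq 0$) versus $e^{[i]}_k\neq 0$ (where $\widetilde x^{[i]}_k e^{[i]}_k<0$), then finish exactly as in Theorem~\ref{theorem:bimodal stability}. Your step that $E_k=G(1)KE_k$ forces $E_k=0$, because $I-G(1)K=(K^{-1}-G(1))K$ is nonsingular, is slightly more precise than the paper's phrasing.
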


\begin{IEEEproof}
We use the same candidate Lyapunov function $W$ in \eqref{eq:W} as in the bimodal case. Its positive definiteness follows exactly as in the proof of Theorem~\ref{theorem:bimodal stability}.

As shown in Corollary \ref{corollary:trimodal SANI}, the trimodal multi-HIGS is SANI with storage function \eqref{eq:multi-HIGS storage}. Conducting the same dissipation calculation as given by \eqref{eq:W difference} in the proof of Theorem~\ref{theorem:bimodal stability} yields
\begin{equation*}
  W(x_{k+1},\widetilde X_{k+1}) - W(x_k,\widetilde X_k) \le 0
\end{equation*}
for all $k$, with equality if and only if both subsystems are lossless at step $k$; that is:
\begin{align}
  V(x_{k+1}) - V(x_k)
  &= u_k^{\top} (y_{k+1}-y_k),\notag\\
  \widehat V(\widetilde X_{k+1}) - \widehat V(\widetilde X_k)
  &= E_k^{\top} (\widetilde X_{k+1}-\widetilde X_k). \label{eq:multi lossless trimodal}
\end{align}
Similar to the analysis in the proof of Theorem \ref{theorem:bimodal stability}, \eqref{eq:multi lossless trimodal} implies that each individual HIGS
$\mc H_i$ is lossless, i.e.,
\begin{equation}\label{eq:HIGS i lossless trimodal}
  \widetilde V_i(\widetilde x^{[i]}_{k+1})
  - \widetilde V_i(\widetilde x^{[i]}_k)
  = e^{[i]}_k\bigl(\widetilde x^{[i]}_{k+1}-\widetilde x^{[i]}_k\bigr),
\end{equation}
with $\widetilde V_i$ given by \eqref{eq:HIGS i storage trimodal}. Substituting \eqref{eq:HIGS i storage trimodal} into
\eqref{eq:HIGS i lossless trimodal} yields
\begin{equation}\label{eq:HIGS i lossless eq trimodal}
  \frac{1}{2\kappa_i}\Bigl[
    (\widetilde x^{[i]}_{k+1})^2 - (\widetilde x^{[i]}_k)^2
  \Bigr]
  = e^{[i]}_k\bigl(\widetilde x^{[i]}_{k+1}-\widetilde x^{[i]}_k\bigr).
\end{equation}
We now analyze \eqref{eq:HIGS i lossless eq trimodal} for each mode of the
trimodal HIGS $\mc H_i$. The integrator mode case $\bigl(e^{[i]}_k,\widetilde x^{[i]}_{k|\mi}\bigr)\in\mc F_i$ and the gain mode case $\bigl(e^{[i]}_k,\widetilde x^{[i]}_{k|\mi}\bigr)\in\mc F_{i,a}$ follow the same analysis as in the proof of Theorem \ref{theorem:bimodal stability}. Now, we consider the zeroing mode case where $\bigl(e^{[i]}_k,\widetilde x^{[i]}_{k|\mi}\bigr)\in\mc F_{i,b}$.

\medskip
\noindent
\textbf{Zeroing mode Case:}
$\bigl(e^{[i]}_k,\widetilde x^{[i]}_{k|\mi}\bigr)\in\mc F_{b,i}$, and
\[
  \widetilde x^{[i]}_{k+1} = 0.
\]
In this case, \eqref{eq:HIGS i lossless eq trimodal} becomes
\begin{equation*}
  -\frac{1}{2\kappa_i}(\widetilde x^{[i]}_k)^2
  = e^{[i]}_k(0-\widetilde x^{[i]}_k)
  = -e^{[i]}_k \widetilde x^{[i]}_k,
\end{equation*}
or equivalently
\begin{equation}\label{eq:zero mode equality simplified}
  \frac{1}{2\kappa_i}(\widetilde x^{[i]}_k)^2
  = e^{[i]}_k \widetilde x^{[i]}_k.
\end{equation}
If $e^{[i]}_k=0$, then $\mc F_{b,i}$ requires $\widetilde x_{k|\mi}^{[i]}\neq 0$. This implies $\widetilde x^{[i]}_k\neq 0$ by \eqref{eq:trimodal multi-HIGS int}, which contradicts \eqref{eq:zero mode equality simplified}. If $e^{[i]}_k\neq 0$, then $\mc F_{b,i}$ implies that $\widetilde x^{[i]}_{k|\mi}e^{[i]}_k<0$. Expanding this using \eqref{eq:trimodal multi-HIGS int} yields $e^{[i]}_k \widetilde x^{[i]}_k+\omega_i \l(e^{[i]}_k\r)^2<0$. Because $\omega_i > 0$, this strict inequality necessitates that $e^{[i]}_k \widetilde x^{[i]}_k<0$, which contradicts with \eqref{eq:zero mode equality simplified}. Hence, when $W(x_{k+1},\widetilde X_{k+1}) - W(x_k,\widetilde X_k)=0$, no HIGS channel $\mc H_i$ can operate in the zeroing mode. In other words, whenever the closed-loop system is lossless at step $k$, each HIGS
$\mc H_i$ must be either in the integrator mode or in the gain
mode.

The remainder of the proof follows identically from the final steps of the proof of Theorem~\ref{theorem:bimodal stability}.
\end{IEEEproof}

\begin{figure}[t]
    \centering
    \begin{overpic}[width=1\linewidth]{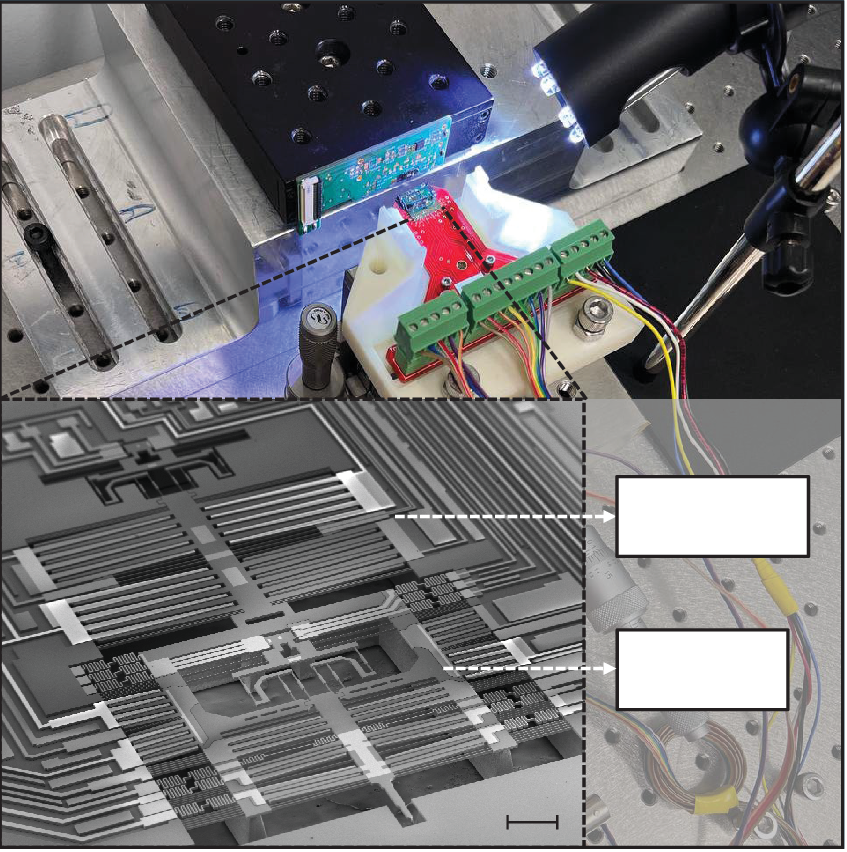}
        \put(73.7,20.6){%
            \parbox[c]{3cm}{
                \fontsize{8}{9}\selectfont 
                Inner-Stage\\
                Force Sensor
            }
        }
        \put(73.7,38.2){%
            \parbox[c]{3cm}{
                \fontsize{8}{9}\selectfont 
                Outer-Stage\\
                Nanopositioner
            }
        }
        \put(57,5){%
            \parbox[c]{3cm}{
                \fontsize{8}{9}\selectfont 
                $500\mu m$
            }
        }
    \end{overpic}
    \caption{\small The testbed with the dual-stage MEMS force sensor mounted on a custom-made PCB with a scanning electron microscopy (SEM) image of the MEMS device reported in \cite{dadkhah2024design}.}
    \label{fig:SEM}
\end{figure}

\section{APPLICATION: A MEMS FORCE SENSOR}\label{Application: A MEMS Force Sensor}

\subsection{Dual-stage MEMS force sensor}
We develop a MIMO HIGS controller for a dual-axis MEMS force sensor featuring collocated electrostatic actuators and electrothermal displacement sensors that generate and detect bidirectional motion along the X-axis. A scanning electron microscope (SEM) image of the device is presented in Fig.~\ref{fig:SEM}. The double-stage force sensor integrates nanopositioning and force sensing systems within a total length of $7.85~\mathrm{mm}$. The force sensing stage, measuring $3.86~\mathrm{mm}$ in length and $2.82~\mathrm{mm}$ in width, incorporates a shuttle beam with dimensions of $2.37~\mathrm{mm} \times 0.250~\mathrm{mm}$, ending in a sharp $2 \times 25~\mu \mathrm{m}$ tip. The nanopositioning stage provides a maximum linear displacement of $5.3~\mu \mathrm{m}$, while the force sensing stage operates within a linear travel range of $2.29~\mu \mathrm{m}$ and an open-loop force range of $86.28~\mu \mathrm{N}$. The design and characterization of the device are detailed in \cite{dadkhah2024design}.

\begin{figure}
    \centering
    \includegraphics[width=0.9\linewidth]{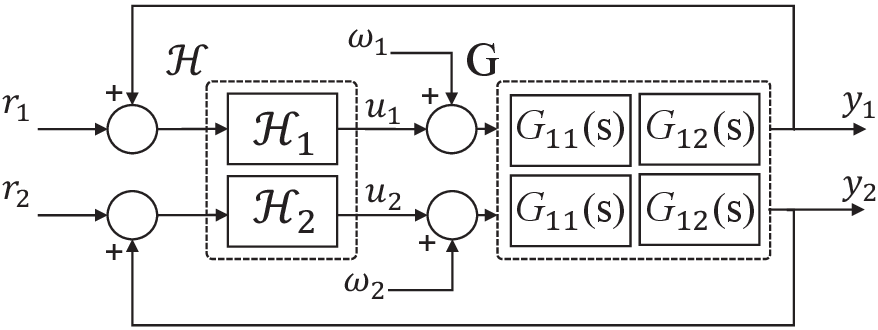}
    \caption{\small Closed-loop interconnection of a TITO multi-HIGS and the TITO dual-stage MEMS force sensor G(s) for damping.}
    \label{fig:DampedScheme}
\end{figure}

\begin{figure*}
\centering
\includegraphics[width=1\textwidth]{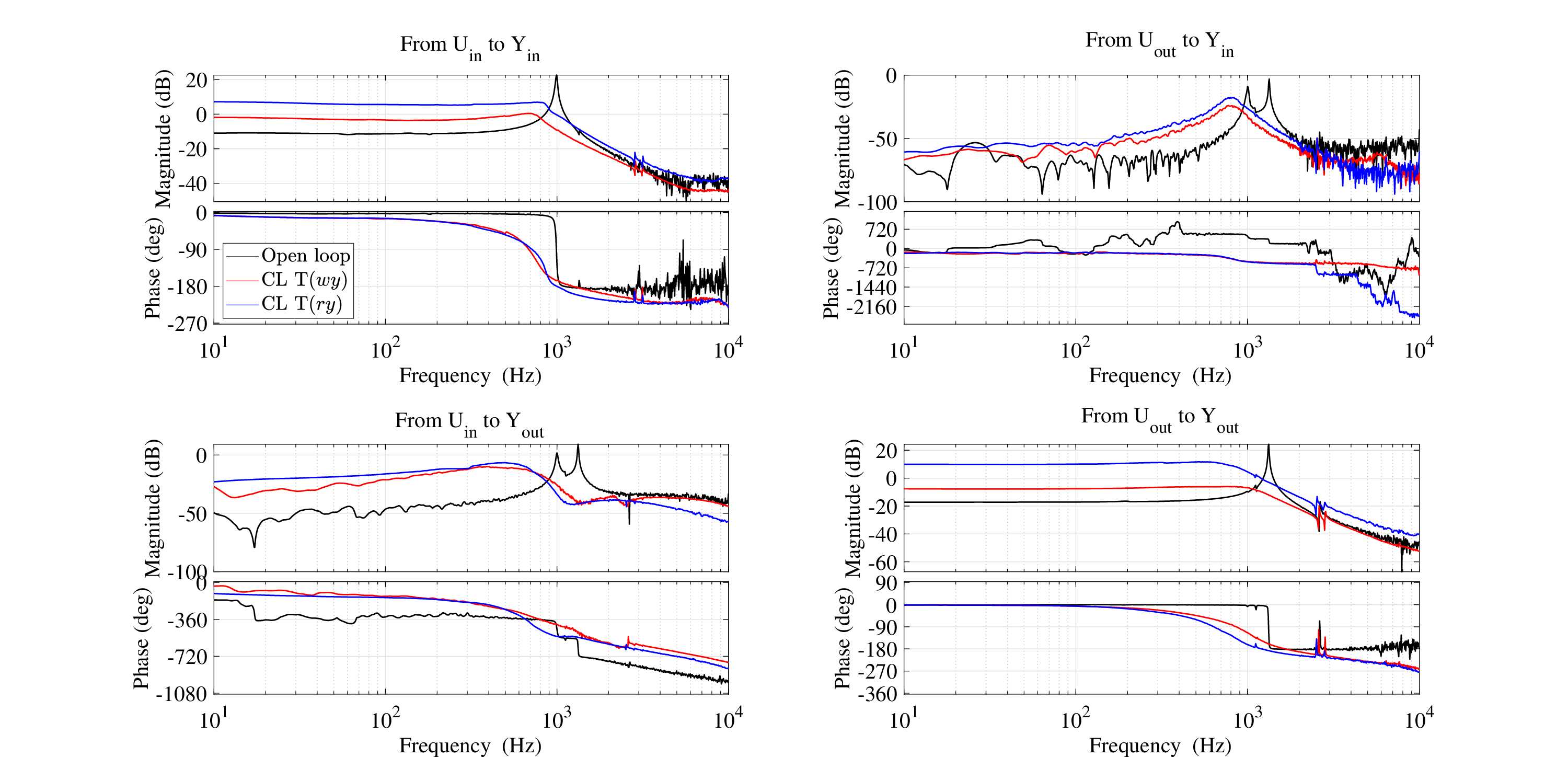}
\caption{\small Frequency responses of the MEMS force sensor in open and closed loop with TITO bimodal multi-HIGS in positive feedback.}
\label{fig:OLCL}
\end{figure*}

The frequency-domain response of the dual-stage MEMS force sensor is obtained using an Onosokki FFT analyzer (CF-9400). The device is modeled as a two-input two-output (TITO) system, where the inputs correspond to the voltages applied to the electrostatic actuators via a $20\times$ voltage amplifier, and the outputs are the voltage signals from the inner-stage and outer-stage sensors. The frequency response function (FRF) of the TITO MEMS force sensor is defined as:
\begin{flalign*}
&G(j\omega)=
\begin{bmatrix}
     G_{11}(j\omega)&G_{12}(j\omega)\\
G_{21}(j\omega)&G_{22}(j\omega)
 \end{bmatrix}=
 \begin{bmatrix}
     \frac{Y_{in}}{U_{in}}(j\omega) &  \frac{Y_{in}}{U_{out}}(j\omega)\\
\frac{Y_{out}}{U_{in}}(j\omega) &  \frac{Y_{out}}{U_{out}}(j\omega)
 \end{bmatrix}&
\end{flalign*}
where $U_{in,out}$ and $Y_{in,out}$ represent the Fourier transforms of inner-stage force sensor and outer-stage nanopositioner actuation voltages and sensor outputs, respectively. A sinusoidal chirp signal sweeping across a $10~\mathrm{kHz}$ bandwidth is applied to the actuators, and the corresponding responses are captured from the electrothermal sensors' trans-impedance readout circuit. Fig.~\ref{fig:OLCL} shows the obtained open-loop frequency response of the TITO system. The dominant resonant modes of the inner-stage force sensor and outer-stage nanopositioner occur at $993~\mathrm{Hz}$ and $1326~\mathrm{Hz}$, respectively. Due to the mechanical coupling between the stages, the coupling terms also appear at these frequencies. A fourth-order state-space model is estimated using the plant frequency response data and the MATLAB $n4sid$ system identification toolbox: 
\begin{flalign*}&A=
{\small\begin{bmatrix}
     -104.10&5622&3693&-1174\\
    -5238&-17.10&596.60&6212\\
    -2765&-461.60&-37.94&-7586\\
    646.40&-3833&5758&-22.39\\
 \end{bmatrix}},&
\end{flalign*}

\begin{flalign*}&
B={\small\begin{bmatrix}
     23.58&3.81\\
     9.50&-4.25\\
     2.36&4.89\\
    5.69&-14.11\\
 \end{bmatrix}},\ C={\small\begin{bmatrix}
     18.69&6.88\\
     -45.25&-43.13\\
     -50.40&40.23\\
     0.23&26.27\\
 \end{bmatrix}^{\top}},\ D=0.&
\end{flalign*}

To assess the NI property of the TITO MEMS force sensor, we evaluate the eigenvalues of the matrix $j(G(j\omega)-G^*(j\omega))$ (see e.g., \cite{lanzon2008stability}), showing that the system retains the NI property up to $1004~\mathrm{Hz}$. As shown in Fig.~\ref{fig:OLCL}, the phase response remains within the $0$ to $-180^{\circ}$ range up to this frequency but gradually decreases beyond that point, indicating loss of the NI property. Therefore, within this bandwidth, the system is NI, indicating that it is ZOH-NI under sample-and-hold process in digital control. Thus, the internal stability of the closed-loop system is ensured when a multi-HIGS controller with suitable parameters is applied in positive feedback. Given that the operating frequency range of the MEMS force sensor lies within this range, the stability of the positively fed-back closed-loop system with the multi-HIGS controller is inherently assured.

\subsection{Discrete-time multi-HIGS controller design}
According to Theorems \ref{theorem:bimodal stability} and \ref{theorem:trimodal stability}, discrete-time multi-HIGS of the forms (\ref{eq:bimodal multi-HIGS}) and (\ref{eq:trimodal multi-HIGS}) can guarantee closed-loop stability when interconnected to the dual-stage MEMS force sensor integrator frequencies $\omega_{1}$ and $\omega_{2}$ and gain values $\kappa_{1}$ and $\kappa_{2}$ are such that $0 < \omega_i\leq \kappa_i$ and
\begin{equation}\label{eq:example DC gain condition}
	K^{-1}-G(1)>0, 
\end{equation}
where $K = \mathrm{diag}\{\kappa_1,\kappa_2\}$ and $G(1)$ is the DC-gain of the dual-stage MEMS force sensor. Since the condition \eqref{eq:example DC gain condition} is a constraint on $\kappa_i$ $(i=1,2)$ only, we can tune $\omega_i$ after the values of $\kappa_i$ are determined in order to achieve desired closed-loop performance.

Although the controller is designed and implemented directly in discrete time, we refer to the continuous-time HIGS describing function to aid the initial parameter tuning, since it provides an approximate frequency-domain characterization of the HIGS performance. Specifically, the describing function of a HIGS provides a quasi-linear frequency-domain approximation relating a sinusoidal input of frequency $\omega$ to the fundamental harmonic of the corresponding HIGS output \cite{heertjes2019hybrid}:
\begin{equation*}
    \begin{split}
        D_\mathcal{H}(j\omega) &= \frac{\omega_h}{j\omega}[\frac{\gamma(\omega)}{\pi} + j\frac{e^{-2j\gamma(\omega)}-1}{2\pi}-4j\frac{e^{-j\gamma(\omega)}-1}{2\pi}]\\
        &+k_h[\frac{\pi-\gamma(\omega)}{\pi} + j\frac{e^{-2j\gamma(\omega)}-1}{2\pi}],
    \end{split}
\end{equation*}
with $\gamma(\omega) = 2\,\tan^{-1}(\frac{k_h\omega}{\omega_h})$. As noted earlier, the discrete-time gain parameter $\kappa_i$ plays the same sector-defining role as the continuous-time gain $k_h$, while the discrete-time integrator parameter $\omega_i$ corresponds to the sampled integrator increment and is related to the continuous-time integrator rate through $\omega_i \approx \omega_h T_s$, where $T_s=1/f_s$ is the sampling period. In the experiments, the sampling frequency is $f_s=50~\mathrm{kHz}$, which gives $T_s=20~\mu\mathrm{s}$. This correspondence is used here only to guide parameter selection of each diagonal HIGS channel while the controller itself remains the discrete-time multi-HIGS defined in \eqref{eq:bimodal multi-HIGS} and \eqref{eq:trimodal multi-HIGS}.

Using this channel-wise approximation as an initial tuning guide, we selected
\begin{align*}
    K=&\ \mathrm{diag}\{0.75\,G_{11}(1)^{-1},\,0.85\,G_{22}(1)^{-1}\},\\
    \Omega=&\ \mathrm{diag}\{1.4\,\omega_{n1}T_s,\;3.2\,\omega_{n2}T_s\},
\end{align*}
where $G_{ii}(1)$ denotes the DC-gain and $\omega_{ni}$ denotes the dominant natural frequency of the force sensing and nanopositioning stages, respectively. The resulting discrete-time HIGS parameters are
\begin{equation*}
K=
\begin{bmatrix}
     2.81&0\\
     0&6.25
 \end{bmatrix},\qquad
\Omega=
\begin{bmatrix}
   0.174&0\\
   0&0.532
 \end{bmatrix}.
\end{equation*}
These values were then implemented directly in the discrete-time bimodal and trimodal multi-HIGS controllers and validated experimentally.

\subsection{Closed-loop performance}
The closed-loop control scheme of the TITO dual-stage MEMS force sensor with a TITO multi-HIGS is shown in Fig.~\ref{fig:DampedScheme}. In this setup, $r$ represents the reference signal and $w$ is the input disturbance to the plant. Since the main goal in damping the resonant modes is to improve input disturbance rejection rather than to emphasize reference tracking, the analysis centers on the input disturbance to the sensor output. The closed-loop disturbance rejection ($T_{wy}$) frequency response of the MEMS force sensor in positive feedback with the TITO bimodal multi-HIGS is experimentally obtained and shown in Fig.~\ref{fig:OLCL}. As seen, the HIGS element effectively damps the resonant modes of both stages, thereby eliminating cross-coupling between them, as evident from the closed-loop frequency responses of $T_{12}$ and $T_{21}$ in Fig.~\ref{fig:OLCL}. The HIGS parameters can also be adjusted to achieve a zero closed-loop DC-gain for improved reference tracking. However, this comes at the expense of reduced damping performance.

The closed-loop damping performance was further investigated through time-domain experiments. For real-time measurements, the control scheme depicted in Fig.~\ref{fig:DampedScheme} was digitally implemented using a dSPACE rapid prototyping system with a sampling rate of $50~\mathrm{kHz}$. Step inputs with an amplitude of $0.2~\mathrm{V}$ were applied to both open-loop and closed-loop systems, and the corresponding sensor outputs were recorded in real-time. The normalized results confirm the effectiveness of the multi-HIGS approach in stabilizing the MEMS force sensor within the expected stability conditions, which validates the theoretical expectations as long as the system is NI. 

As depicted in Fig.~\ref{fig:OL_Damped}, the TITO multi-HIGS in positive feedback enhances system performance by effectively damping its resonant modes. This leads to reduced overshoot and settling time for both the inner-stage force sensor and the outer-stage nanopositioner. However, the closed-loop configuration changes the system's low-frequency gain by introducing an additional gain, indicating that the chosen parameters contribute to an increase in the system's DC-gain.

\subsection{Discrete-time bimodal and trimodal HIGS }
\begin{figure}
\centering
\includegraphics[width=0.5\textwidth]{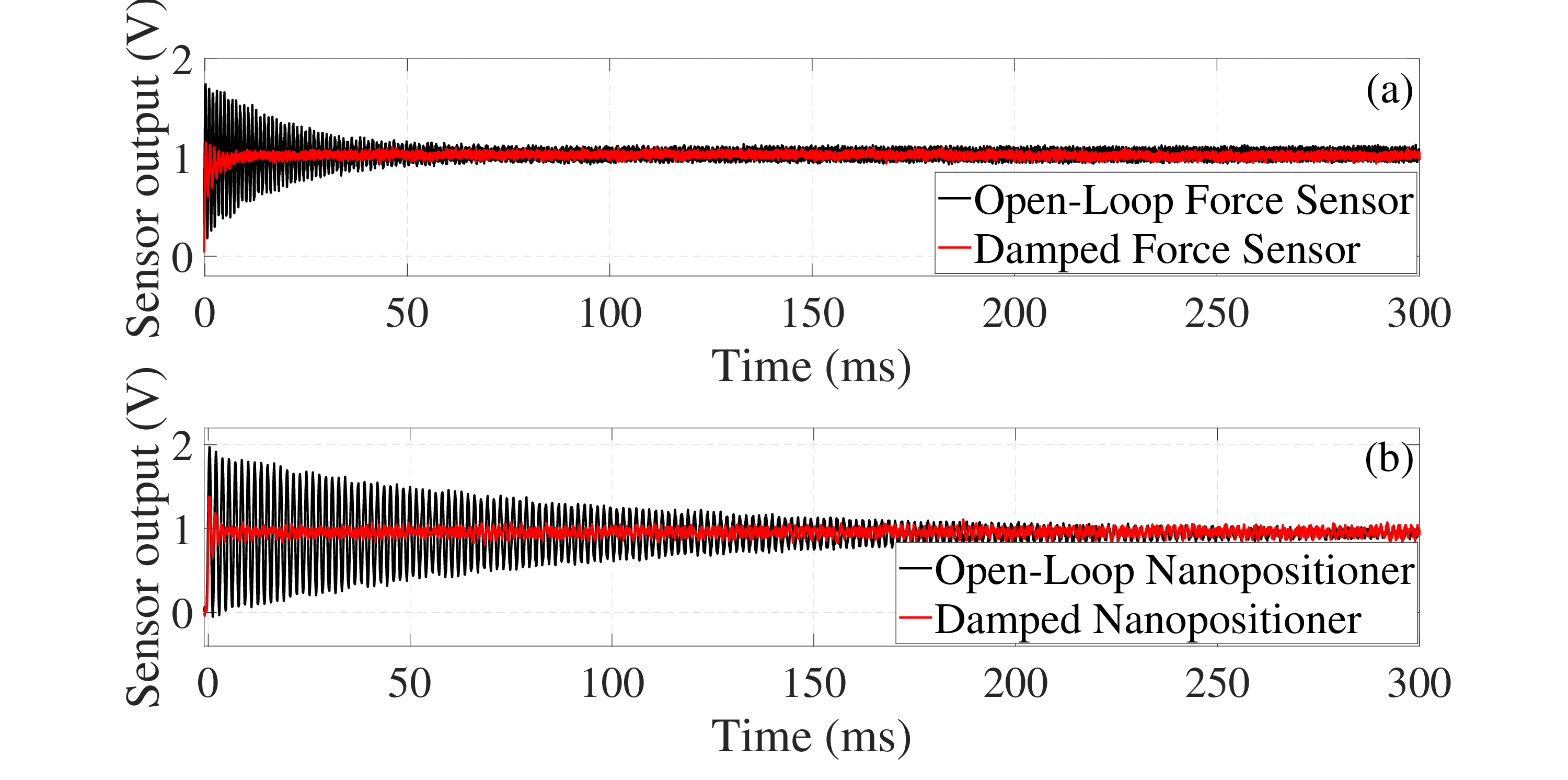}
\caption{\small Normalized step responses of the a) inner-stage force sensor and the b) outer-stage nanopositioner in
open-loop and closed-loop with TITO bimodal multi-HIGS.
}
\label{fig:OL_Damped}
\end{figure}

To compare the stability and performance of the bimodal and trimodal discrete-time HIGS controllers in a positive-feedback interconnection with the MEMS force sensor, both controllers were implemented digitally in real time and evaluated through closed-loop time-domain experiments. A step input of $0.2~\mathrm{V}$ was applied as a disturbance. Fig.~\ref{fig:BMvsTM} shows the normalized sensor outputs and control signals for each controller. The results indicate that the controllers achieve comparable damping and steady-state response. We observe that the closed-loop systems in both cases are asymptotically stable. The trimodal HIGS, however, demonstrates improved control near the equilibrium point. In this region, the sector constraint $\mc F_b$ in (\ref{eq:trimodal HIGS}) helps avoids large jumps in the control signal, which reduces the noise level relative to the bimodal controller. The 1‑$\sigma$ resolution of the control signal is $3~\mathrm{mV}$ for the bimodal HIGS and is reduced to approximately $0.7~\mathrm{mV}$ for the trimodal HIGS near the equilibrium point. Away from equilibrium, both controllers operate with similar performance, leading to comparable control effort and noise levels. The closed-loop frequency responses of the force sensor using TITO trimodal multi-HIGS are identical to the bimodal case presented in Fig.~\ref{fig:OLCL}, and are therefore omitted here to avoid cluttering the figure.

\section{CONCLUSION}\label{Conclusion}
This paper investigates the application of discrete-time bimodal and trimodal multi-HIGS controllers to a TITO NI MEMS force sensor equipped with collocated electrostatic actuators and electrothermal position sensors. Theoretical analysis, supported by experimental validation, demonstrates that the positive-feedback interconnection of a discrete-time multi-HIGS structure with suitable parameters and a MIMO NI system is asymptotically stable. To this end, discrete-time TITO HIGS elements are systematically tuned and digitally implemented. Frequency- and time-domain analyses verify the stability of the closed-loop system and substantiate the effectiveness of the multi-HIGS element as an NI controller. The experimental findings also indicate that the trimodal HIGS provides more precise control compared to the bimodal case. The damping introduced by the HIGS elements attenuate inter-stage cross-coupling leading to decoupling the stage dynamics. As a result, the damped system can be approximated as two cascaded SISO subsystems, which simplifies the subsequent tracking controller design. The development of reference tracking framework is deferred to future work.
\begin{figure}
\centering
\includegraphics[width=0.5\textwidth]{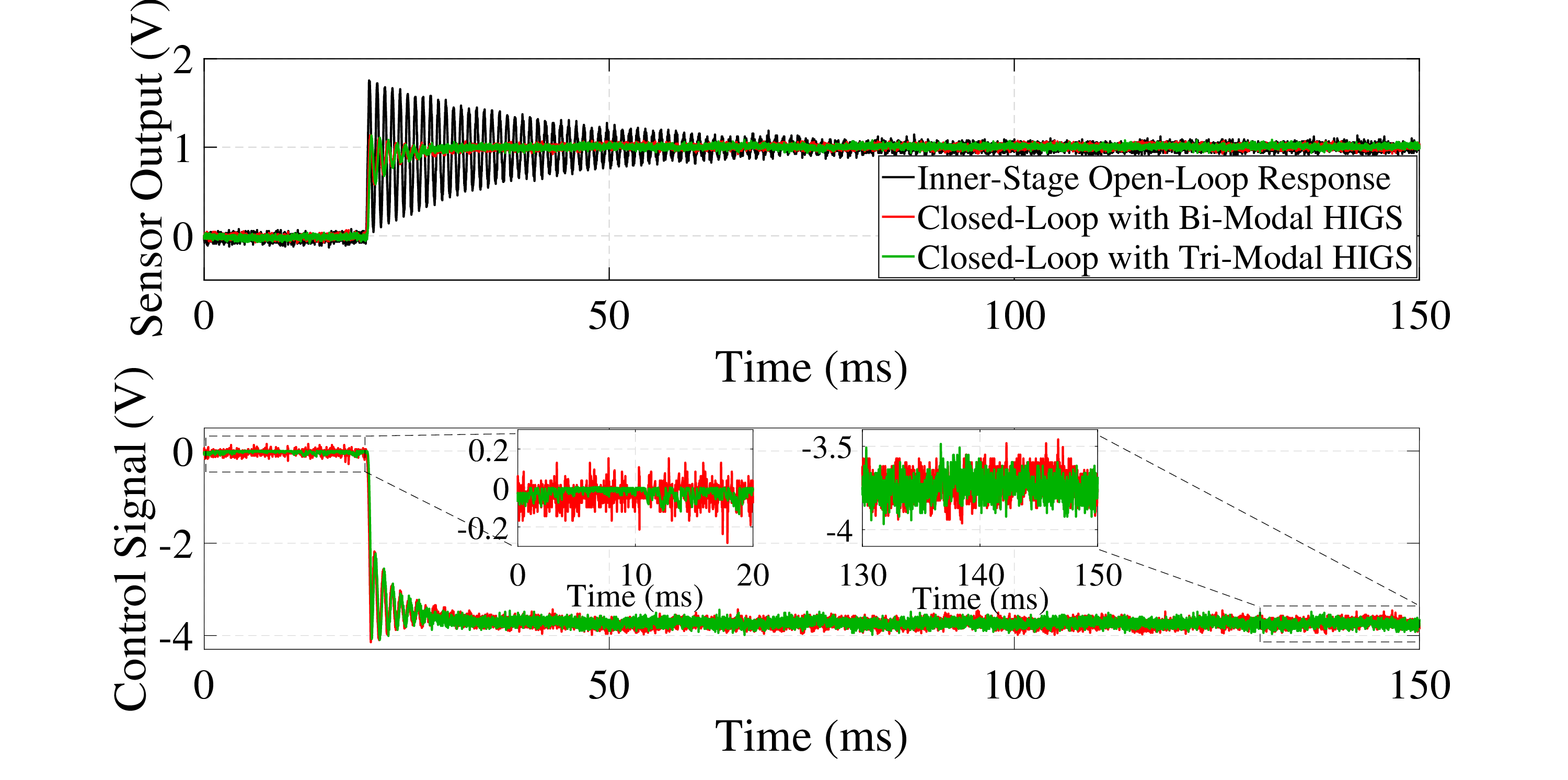}
\caption{\small Closed-loop performance comparison of bimodal and trimodal discrete-time HIGS in positive feedback interconnection with the force sensing system.
}
\label{fig:BMvsTM}
\end{figure}
\bibliographystyle{IEEEtran}
\bibliography{reference.bib}

\begin{thebibliography}{10}
\providecommand{\url}[1]{#1}
\csname url@samestyle\endcsname
\providecommand{\newblock}{\relax}
\providecommand{\bibinfo}[2]{#2}
\providecommand{\BIBentrySTDinterwordspacing}{\spaceskip=0pt\relax}
\providecommand{\BIBentryALTinterwordstretchfactor}{4}
\providecommand{\BIBentryALTinterwordspacing}{\spaceskip=\fontdimen2\font plus
\BIBentryALTinterwordstretchfactor\fontdimen3\font minus \fontdimen4\font\relax}
\providecommand{\BIBforeignlanguage}[2]{{%
\expandafter\ifx\csname l@#1\endcsname\relax
\typeout{** WARNING: IEEEtran.bst: No hyphenation pattern has been}%
\typeout{** loaded for the language `#1'. Using the pattern for}%
\typeout{** the default language instead.}%
\else
\language=\csname l@#1\endcsname
\fi
#2}}
\providecommand{\BIBdecl}{\relax}
\BIBdecl

\bibitem{middleton1991trade}
R.~H. Middleton, ``Trade-offs in linear control system design,'' \emph{Automatica}, vol.~27, no.~2, pp. 281--292, 1991.

\bibitem{dong2008control}
L.~Dong, Q.~Zheng, and Z.~Gao, ``On control system design for the conventional mode of operation of vibrational gyroscopes,'' \emph{IEEE Sensors Journal}, vol.~8, no.~11, pp. 1871--1878, 2008.

\bibitem{peng2005modeling}
K.~Peng, B.~M. Chen, G.~Cheng, and T.~H. Lee, ``Modeling and compensation of nonlinearities and friction in a micro hard disk drive servo system with nonlinear feedback control,'' \emph{IEEE Transactions on Control Systems Technology}, vol.~13, no.~5, pp. 708--721, 2005.

\bibitem{ouakad2015nonlinear}
H.~M. Ouakad, A.~H. Nayfeh, S.~Choura, and F.~Najar, ``Nonlinear feedback controller of a microbeam resonator,'' \emph{Journal of Vibration and Control}, vol.~21, no.~9, pp. 1680--1697, 2015.

\bibitem{luo2016chaos}
S.~Luo and Y.~Song, ``Chaos analysis-based adaptive backstepping control of the microelectromechanical resonators with constrained output and uncertain time delay,'' \emph{IEEE Transactions on Industrial Electronics}, vol.~63, no.~10, pp. 6217--6225, 2016.

\bibitem{liguo2004hybrid}
C.~Liguo, S.~Lining, R.~Weibin, and B.~Xinqian, ``Hybrid control of vision and force for {MEMS} assembly system,'' in \emph{2004 IEEE International Conference on Robotics and Biomimetics}.\hskip 1em plus 0.5em minus 0.4em\relax IEEE, 2004, pp. 136--141.

\bibitem{weller2002hysteresis}
S.~R. Weller and G.~C. Goodwin, ``Hysteresis switching adaptive control of linear multivariable systems,'' \emph{IEEE Transactions on Automatic Control}, vol.~39, no.~7, pp. 1360--1375, 2002.

\bibitem{stefanini2011miniature}
R.~Stefanini, M.~Chatras, P.~Blondy, and G.~M. Rebeiz, ``Miniature {MEMS} switches for {RF} applications,'' \emph{Journal of Microelectromechanical Systems}, vol.~20, no.~6, pp. 1324--1335, 2011.

\bibitem{zhang2016adaptive}
Y.~Zhang and Q.~Xu, ``Adaptive sliding mode control with parameter estimation and kalman filter for precision motion control of a piezo-driven microgripper,'' \emph{IEEE transactions on control systems technology}, vol.~25, no.~2, pp. 728--735, 2016.

\bibitem{clegg1958nonlinear}
J.~C. Clegg, ``A nonlinear integrator for servomechanisms,'' \emph{Transactions of the American Institute of Electrical Engineers, Part II: Applications and Industry}, vol.~77, no.~1, pp. 41--42, 1958.

\bibitem{horowitz1975non}
I.~Horowitz and P.~Rosenbaum, ``Non-linear design for cost of feedback reduction in systems with large parameter uncertainty,'' \emph{International Journal of Control}, vol.~21, no.~6, pp. 977--1001, 1975.

\bibitem{deenen2017hybrid}
D.~A. Deenen, M.~F. Heertjes, W.~Heemels, and H.~Nijmeijer, ``Hybrid integrator design for enhanced tracking in motion control,'' in \emph{2017 American Control Conference (ACC)}.\hskip 1em plus 0.5em minus 0.4em\relax IEEE, 2017, pp. 2863--2868.

\bibitem{dadkhah2025digital}
D.~Dadkhah, E.~Khodabakhshi, and S.~O.~R. Moheimani, ``Digital implementation of tracking and damping control based on hybrid integrator-gain system for a {MEMS} force sensor,'' in \emph{2025 American Control Conference (ACC)}.\hskip 1em plus 0.5em minus 0.4em\relax IEEE, 2025, pp. 4597--4602.

\bibitem{van2020experimental}
S.~Van~den Eijnden, M.~F. Heertjes, and H.~Nijmeijer, ``Experimental demonstration of a nonlinear pid-based control design using multiple hybrid integrator-gain elements,'' in \emph{2020 American Control Conference (ACC)}.\hskip 1em plus 0.5em minus 0.4em\relax IEEE, 2020, pp. 4307--4312.

\bibitem{shi2024hybrid}
K.~Shi and I.~R. Petersen, ``Hybrid integrator-gain system based integral resonant controllers for negative imaginary systems,'' in \emph{2024 IEEE 63rd Conference on Decision and Control (CDC)}.\hskip 1em plus 0.5em minus 0.4em\relax IEEE, 2024, pp. 2379--2384.

\bibitem{van2020hybrid}
S.~Van~den Eijnden, M.~F. Heertjes, W.~Heemels, and H.~Nijmeijer, ``Hybrid integrator-gain systems: A remedy for overshoot limitations in linear control?'' \emph{IEEE Control Systems Letters}, vol.~4, no.~4, pp. 1042--1047, 2020.

\bibitem{heertjes2019hybrid}
M.~Heertjes, S.~Van Den~Eijnden, B.~Sharif, M.~Heemels, and H.~Nijmeijer, ``Hybrid integrator-gain system for active vibration isolation with improved transient response,'' \emph{IFAC-PapersOnLine}, vol.~52, no.~15, pp. 454--459, 2019.

\bibitem{franklin1998digital}
G.~F. Franklin, J.~D. Powell, M.~L. Workman \emph{et~al.}, \emph{Digital control of dynamic systems}.\hskip 1em plus 0.5em minus 0.4em\relax Addison-wesley Menlo Park, CA, 1998, vol.~3.

\bibitem{sharif2022discrete}
B.~Sharif, D.~W. Alferink, M.~F. Heertjes, H.~Nijmeijer, and W.~Heemels, ``A discrete-time approach to analysis of sampled-data hybrid integrator-gain systems,'' in \emph{2022 IEEE 61st Conference on Decision and Control (CDC)}.\hskip 1em plus 0.5em minus 0.4em\relax IEEE, 2022, pp. 7612--7617.

\bibitem{shi2025discrete}
K.~Shi, E.~Khodabakhshi, P.~Biswas, I.~R. Petersen, and S.~O.~R. Moheimani, ``Discrete-time {HIGS} based digital control of negative imaginary systems,'' \emph{Control Engineering Practice}, vol. 163, p. 106386, 2025.

\bibitem{sharif2024analysis}
B.~Sharif, D.~Alferink, M.~Heertjes, H.~Nijmeijer, and M.~Heemels, ``Analysis of sampled-data hybrid integrator-gain systems: A discrete-time approach,'' \emph{Automatica}, vol. 167, p. 111765, 2024.

\bibitem{lanzon2008stability}
A.~Lanzon and I.~R. Petersen, ``Stability robustness of a feedback interconnection of systems with negative imaginary frequency response,'' \emph{IEEE Transactions on Automatic Control}, vol.~53, no.~4, pp. 1042--1046, 2008.

\bibitem{petersen2010feedback}
I.~R. Petersen and A.~Lanzon, ``Feedback control of negative-imaginary systems,'' \emph{IEEE Control Systems Magazine}, vol.~30, no.~5, pp. 54--72, 2010.

\bibitem{ghallab2025negative}
A.~G. Ghallab, M.~A. Mabrok, and I.~R. Petersen, ``Negative imaginary systems theory for nonlinear systems: A dissipativity approach,'' \emph{IEEE Transactions on Automatic Control}, vol.~70, no.~12, pp. 8120--8132, 2025.

\bibitem{shi2023output}
K.~Shi, I.~R. Petersen, and I.~G. Vladimirov, ``Output feedback consensus for networked heterogeneous nonlinear negative-imaginary systems with free-body motion,'' \emph{IEEE Transactions on Automatic Control}, vol.~68, no.~9, pp. 5536--5543, 2023.

\bibitem{shi2024discrete}
K.~Shi, I.~R.~Petersen, and I.~G.~Vladimirov, ``Discrete-time negative imaginary systems from {ZOH} sampling,'' \emph{IFAC-PapersOnLine}, vol.~58, no.~17, pp. 214--219, 2024.

\bibitem{shi2024digital}
K.~Shi and I.~R. Petersen, ``Digital control of negative imaginary systems: a discrete-time hybrid integrator-gain system approach,'' in \emph{2024 European Control Conference (ECC)}.\hskip 1em plus 0.5em minus 0.4em\relax IEEE, 2024, pp. 2611--2616.

\bibitem{shi2022negative}
K.~Shi, N.~Nikooienejad, I.~R. Petersen, and S.~O.~R. Moheimani, ``A negative imaginary approach to hybrid integrator-gain system control,'' in \emph{2022 IEEE 61st Conference on Decision and Control (CDC)}.\hskip 1em plus 0.5em minus 0.4em\relax IEEE, 2022, pp. 1968--1973.

\bibitem{shi2023negative}
K.~Shi, N.~Nikooienejad, I.~R. Petersen, and S.~O.~R.~Moheimani, ``Negative imaginary control using hybrid integrator-gain systems: Application to {MEMS} nanopositioner,'' \emph{IEEE Transactions on Control Systems Technology}, vol.~32, no.~3, pp. 718--730, 2023.

\bibitem{dadkhah2024design}
D.~Dadkhah and S.~O.~R. Moheimani, ``Design, fabrication, and control of a double-stage {MEMS} force sensor,'' \emph{IEEE/ASME Transactions on Mechatronics}, 2024.

\end{thebibliography}

\end{document}